\begin{document}
\title{Dual Adjunction Between \texorpdfstring{$\Omega$}{Omega}-Automata and Wilke Algebra Quotients}
%
%
\author{Anton Chernev\inst{1}\orcidID{0009-0002-6232-5604} \and
Helle~Hvid Hansen\inst{1}\orcidID{0000-0001-7061-1219} \and
Clemens Kupke\inst{2}\orcidID{0000-0002-0502-391X}}
\authorrunning{A. Chernev et al.}
%
\institute{University of Groningen, Netherlands \\ \email{a.chernev@rug.nl} \\ \email{h.h.hansen@rug.nl} \and
University of Strathclyde, United Kingdom \\
\email{clemens.kupke@strath.ac.uk}}
\maketitle              
\begin{abstract}
$\Omega$-automata and Wilke algebras are formalisms for characterising $\omega$-regular languages via their ultimately periodic words. $\Omega$-automata read finite representations of ultimately periodic words, called lassos, and they are a subclass of lasso automata.
We introduce lasso semigroups as a generalisation of Wilke algebras that mirrors how lasso automata generalise $\Omega$-automata, and we show that finite lasso semigroups characterise regular lasso languages. 
We then show a dual adjunction between lasso automata and quotients of the free lasso semigroup with a recognising set, and as our main result we show that this dual adjunction restricts to one between $\Omega$-automata and quotients of the free Wilke algebra with a recognising set. 

\keywords{Infinite words \and $\omega$-regular languages \and Ultimately periodic words \and $\Omega$-automata \and Wilke algebra \and Coalgebra}
\end{abstract}
\section{Introduction}
\emph{$\Omega$-automata} \cite{CianciaVenema2012StreamAutomataAreCoalgebras,CianciaVenema2019OmegaAutomataACoalgebraicPerspective} were introduced as a way of capturing $\omega$-regular languages coalgebraically \cite{Rutten:TCS2000}. This is based on two main observations. First, every $\omega$-regular language $L$ is determined by its set of \emph{ultimately periodic words} $ \{uv^\omega \mid uv^\omega \in L \}$ (e.g., \cite[Fact~1]{CalbrixNivatPodelski1994UltimatelyPeriodicWords}). Second, for every $\omega$-regular language $L$, the language $\{ u\$v \mid uv^\omega \in L \}$ is regular \cite[Prop.~4]{CalbrixNivatPodelski1994UltimatelyPeriodicWords}. $\Omega$-automata run on \emph{lassos}, which are pairs of finite words $(u, v)$ representing $uv^\omega$. Thus every $\omega$-regular language $L$ is identified by an $\Omega$-automaton accepting the \emph{lasso language} $\{ (u,v) \mid uv^\omega \in L \}$.
The fact that $\Omega$-automaton bisimilarity corresponds to lasso language equivalence \cite{CianciaVenema2012StreamAutomataAreCoalgebras} enables algorithms for deciding language equivalence of $\Omega$-automata, as well as minimisation algorithms using partition refinement \cite{CianciaVenema2019OmegaAutomataACoalgebraicPerspective} or Brzozowski-style via dual adjunctions \cite[Ch.~8]{Cruchten2022TopicsInOmegaAutomata}.

$\Omega$-automata are defined as the subclass of \emph{lasso automata} \cite{CianciaVenema2012StreamAutomataAreCoalgebras}
that satisfy two conditions (circularity and coherence) which ensure that $\Omega$-automata accept
lasso languages that are \emph{saturated}, meaning that $u_1v_1^\omega = u_2v_2^\omega$ implies $(u_1,v_1)$ and $(u_2, v_2)$ are both accepted or both rejected. This is required in order for $\Omega$-automata languages to correspond to $\omega$-regular languages.
Lasso automata (accepting non-saturated languages) are of independent interest. They are studied in \cite{AngluinFisman2016LearningRegularOmegaLang} (under the name FDFAs) in the context of learning $\omega$-regular languages. There it is shown that certain lasso automaton representations of $\omega$-regular languages can be factorially smaller than their $\Omega$-automaton representations\footnote{In the terminology of \cite{AngluinFisman2016LearningRegularOmegaLang}, \emph{syntactic} and \emph{recurrent FDFAs} can be smaller than $L_\$$.}.

Our motivation for the present work is to better understand the mathematical connections between the coalgebraic theory of $\omega$-regular languages, given by $\Omega$-automata, and the algebraic theory of $\omega$-regular languages, given by \emph{algebraic recognition} via \emph{Wilke algebras} \cite[Sec.~2.5]{PerrinPin2004InfiniteWords}.
In the setting of finite words, \cite{Planting2013AutomataToMonoids,Cruchten2024OnTransitionConstructionsArxiv} show an adjunction between deterministic finite automata, on the coalgebra side, and monoid congruences \cite{Pin:MathematicalFoundationsOfAutomataTheory}, on the algebra side.
We are interested in establishing a similar result for $\Omega$-automata and Wilke algebras. 
In \cite[Ch.~5]{Cruchten2022TopicsInOmegaAutomata}, a construction is given from $\Omega$-automata to Wilke algebra homomorphisms that recognise the same language. However, the construction is only defined on objects, and the converse direction is not treated.

In this paper, we exhibit a dual adjunction between $\Omega$-automata and \emph{extended Wilke algebras}. We define the latter as surjective homomorphisms with the freely generated Wilke algebra as their domain, together with a recognising set. 
We obtain this adjunction as the restriction of another adjunction, between lasso automata and a new type of algebraic structures that we call \emph{extended lasso semigroups}. We define lasso semigroups by omitting the \emph{circularity} and \emph{coherence} axioms of Wilke algebras. The lasso automaton adjunction looks as follows:
\begin{equation}
    \begin{tikzcd}
        {\text{Ext Lasso Sgp}} & \bot & {\text{Lasso Aut}} & \bot & {\text{Lasso Aut}^{\mathrm{op}}}
    	\arrow["\Aut", curve={height=-15pt}, from=1-1, to=1-3]
    	\arrow["\Alg", curve={height=-15pt}, from=1-3, to=1-1]
    	\arrow["\Rev", curve={height=-15pt}, from=1-3, to=1-5]
    	\arrow["{\op\Rev}", curve={height=-15pt}, from=1-5, to=1-3]
    \end{tikzcd}
    \label{diag:lassoAutAdjInformal}
\end{equation}
On the right, $\Rev \dashv \op \Rev$ is the transition-reversal adjunction described in~\cite[Sec.~8.1]{Cruchten2022TopicsInOmegaAutomata}. On the left, $\Aut$ and $\Alg$ are new constructions between extended lasso semigroups and lasso automata that \emph{reverse} the accepted language. In particular, $\Alg$ is different from the construction in \cite[Ch.~5]{Cruchten2022TopicsInOmegaAutomata}. 
By taking suitable restrictions of the functors in Diagram \labelcref{diag:lassoAutAdjInformal}, we obtain the adjunction:
\begin{equation}
    \begin{tikzcd}
        {\text{Ext Wilke Alg}} & \bot & {\text{$\Omega^\rev$-Aut}} & \bot & {\text{$\Omega$-Aut}^{\mathrm{op}}}
        \arrow["\Aut", curve={height=-15pt}, from=1-1, to=1-3]
        \arrow["\Alg", curve={height=-15pt}, from=1-3, to=1-1]
        \arrow["\Rev", curve={height=-15pt}, from=1-3, to=1-5]
        \arrow["{\op\Rev}", curve={height=-15pt}, from=1-5, to=1-3]
    \end{tikzcd}
    \label{diag:OmegaAutAdjInformal}
\end{equation}
Here $\Omega^\rev$-automata (in words, \emph{reverse-$\Omega$-automata}) are a new type of lasso automata that correspond to the reverse of $\Omega$-automata.

Furthermore, we show that lasso semigroups provide an algebraic characterisation of lasso languages beyond saturated languages. That is, homomorphisms into finite lasso semigroups recognise precisely the regular lasso languages.

We note that
dual adjunctions between coalgebras and algebras have been shown in \cite{Bezhanishvili2023MinimizationInLogicalForm,BonchiEtAl2014AlgebraCoalgebraDualityInBrzozowski,Rot16:CoalgMinInitialityFinality} to give rise to abstract minimisation algorithms for a wide range of automata that operate on finite words. Similar results have been shown for $\Omega$-automata in \cite[Ch.~8]{Cruchten2022TopicsInOmegaAutomata}. These dual adjunctions are of a different nature from the ones studied here, but they have also served as motivation and inspiration. 

The paper is organised as follows. In \Cref{sec:preliminaries} we collect basic definitions and notation on lasso automata, $\Omega$-automata and Wilke algebras. In \Cref{sec:algRecognitionLassoLanguages} we introduce lasso semigroups, define the maps $\Aut, \Alg$ and $\Rev$ and use them to show that finite lasso semigroups recognise $\omega$-regular languages (\Cref{thm:lassoAutomataAdjunctionChain}). In \Cref{sec:adjunctionLassoAutLassoAlg} we extend these maps to functors and prove the adjunction from Diagram \labelcref{diag:lassoAutAdjInformal} (\Cref{thm:lassoAutomataAdjunctionChain}). 
We use it to derive the adjunction from Diagram \labelcref{diag:OmegaAutAdjInformal} (\Cref{thm:OmegaAutAdjunction}) in \Cref{sec:adjunctionOmegaAutomata}. 
At the end of \Cref{sec:adjunctionLassoAutLassoAlg,sec:adjunctionOmegaAutomata}, we briefly discuss how our functors relate minimal automata and maximal Wilke algebra quotients, and applications of the adjunction.
We conclude with a summary and a discussion of related and future work in \Cref{sec:conclusion}.


\section{Preliminaries}
\label{sec:preliminaries}

We assume familiarity with basic concepts from category theory, such as categories, functors and adjunctions (see, e.g., \cite{Awodey2006CategoryTheory,MacLane1971CategoriesWorking}), and from the theory of $\omega$-regular languages (e.g., \cite{GreenBook}).

\subsection{Languages of Infinite Words}
Throughout this paper, we fix a set of symbols $\Sigma = \{ a, b, \dotsc \}$, called an \emph{alphabet}. Let $\words$ denote the set of \emph{finite words} over $\Sigma$ and $\newords$ denote the set of \emph{non-empty words}. We have $\newords = \words \setminus \{\epsilon\}$, where $\epsilon$ stands for the empty word. We often use the notation $au$ or $ua$, where $a \in \Sigma$ and $u \in \words$, for an arbitrary non-empty word. An \emph{infinite word} over $\Sigma$ is a sequence of elements of $\Sigma$ of length $\omega$. An \emph{ultimately periodic word} is an infinite word of the form $uv^\omega \coloneqq uvv\dotsc$, and the set of all ultimately periodic words is written as $\upwords$. A \emph{lasso} is a pair $(u, v) \in \words\times \newords$, with the set of all lassos written as $\lassos$. Intuitively, the lasso $(u,v)$ represents the ultimately periodic word $uv^\omega$. A \emph{lasso language} is a subset of $\lassos$. Similarly, a \emph{language of infinitely periodic words} is a subset of $\upwords$. A lasso language $L$ is \emph{saturated} if $u_1v_1^\omega = u_2v_2^\omega$ implies $(u_1,v_1) \in L \iff (u_2,v_2)\in L$.

Given some $u = a_1\dotsc a_n \in \words$, we write $u^\rev \coloneqq a_n\dotsc a_1$ for the \emph{reverse word of $u$}. While infinite words cannot be reversed, lassos can, because they are finite objects. Thus define the \emph{reverse of a lasso} $(u,av)$ as the lasso $(u,av)^\rev \coloneqq (v^\rev, au^\rev)$. On the level of languages, given a lasso language $L$, we write $L^\rev \coloneqq \{ (u,av)^\rev \mid (u,av) \in \lassos \}$ for the \emph{reverse lasso language of $L$}.

\subsection{Lasso Automata and $\Omega$-Automata}
\emph{Lasso automata} were introduced in \cite{CianciaVenema2012StreamAutomataAreCoalgebras,CianciaVenema2019OmegaAutomataACoalgebraicPerspective} as acceptors of lasso languages.
\begin{definition}[Lasso automaton {\cite{CianciaVenema2012StreamAutomataAreCoalgebras}}]
    A \emph{lasso automaton} is a tuple $A = (X, Y, q,\rho, \sigma, \xi, F)$ where:
    \begin{itemize}
        \item $X$ and $Y$ are disjoint finite sets whose elements are called \emph{states};
        \item $q$ is a state in $X$ called the \emph{initial state};
        \item the functions $\rho: X \times \Sigma \to X$, $\sigma: X \times \Sigma \to Y$ and $\xi: Y \times \Sigma \to Y$ are called \emph{transition functions};
        \item $F$ is a subset of $Y$ whose elements are called \emph{final states}.
    \end{itemize}
\end{definition}
The transition function $\rho$ will often be tacitly used as a function from $X \times \Sigma^*$ to $X$ in the standard way. That is, $\rho(x, \epsilon) \coloneqq x$ and $\rho(x, ua) \coloneqq \rho(\rho(x, u), a)$. This applies analogously to $\xi$.

The lasso automaton structure allows for a natural definition of lasso acceptance. A lasso $(u,av)$ is read as follows: $\rho$ transitions read $u$, $\sigma$ reads $a$, and $\xi$ reads $v$. Formally, given a lasso automaton $A = (X, Y, q,\rho, \sigma, \xi, F)$, define \emph{the lasso language accepted by $A$} as $\Lasso(A) \coloneqq \{ (u,av)\in\lassos \mid \xi(\sigma(\rho(q,u),a),v) \in F \}$. A lasso language is called \emph{regular} if it is accepted by some \emph{finite} lasso automaton.

\begin{example}
    In \Cref{fig:lassoAut} we see two examples of lasso automata for $\Sigma = \{a,b\}$. 
    It can easily be verified that $\Lasso(A_1) = \{ (u, bv) \mid u,v \in \words \}$
    and $\Lasso(A_2) = \{ (ub, a^n) \mid u \in \words, n \in \omega \}$. Note that $\Lasso(A_1)$ is not saturated, since it contains $(\epsilon,ba)$, but not $(b,ab)$.
\end{example}
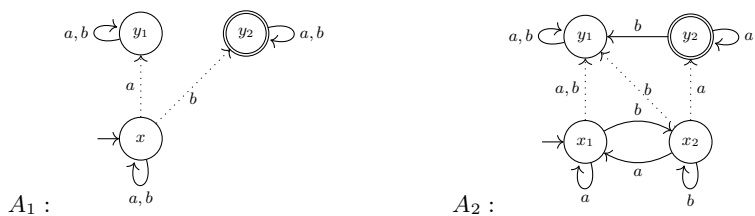
\begin{figure}
    \centering
    $A_1:$ \begin{tikzpicture}[scale=0.7, every node/.style={scale=0.7}]
        \node[state, initial] at (0,0) (x1) {$x$};
        \node[state] at (0, 2) (y1) {$y_1$};
        \node[state, accepting] at (2, 2) (y2) {$y_2$};
        \draw (x1) edge[loop below] node{$a,b$} (x1)
            (x1) edge[left, dotted] node{$a$} (y1)
            (x1) edge[left, below, dotted] node{$b$} (y2)
            (y1) edge[loop left] node{$a,b$} (y1)
            (y2) edge[loop right] node{$a,b$} (y2);
    \end{tikzpicture}
    \hspace{40px}
    $A_2:$ \begin{tikzpicture}[scale=0.7, every node/.style={scale=0.7}]
        \node[state, initial] at (0,0) (x1) {$x_1$};
        \node[state] at (2, 0) (x2) {$x_2$};
        \node[state] at (0, 2) (y1) {$y_1$};
        \node[state, accepting] at (2, 2) (y2) {$y_2$};
        \draw (x1) edge[loop below] node{$a$} (x1)
            (x2) edge[loop below] node{$b$} (x2)
            (x1) edge[bend left, above] node{$b$} (x2)
            (x2) edge[bend left, below] node{$a$} (x1)
            (x1) edge[left, dotted] node{$a,b$} (y1)
            (x2) edge[right, dotted] node{$b$} (y1)
            (x2) edge[right, dotted] node{$a$} (y2)
            (y1) edge[loop left] node{$a,b$} (y1)
            (y2) edge[loop right] node{$a$} (y2)
            (y2) edge[above] node{$b$} (y1);
    \end{tikzpicture}
    \caption{Examples of lasso automata. The dotted arrows are $\sigma$-transitions.}
    \label{fig:lassoAut}
\end{figure}

A state $z$ in a lasso automaton is called \emph{reachable} if there exists a path along $\rho$, $\sigma$ and $\xi$ from the initial state to $z$. If all states in an automaton are reachable, we call it a \emph{reachable automaton}.

A \emph{lasso automaton morphism} is a structure-preserving map between lasso automata. More precisely, given two lasso automata $A_i = (X_i, Y_i, q_i,$ $\rho_i, \sigma_i, \xi_i, F_i)$, for $i \in \{1,2\}$, a lasso automaton morphism is a pair of maps $h = (h^X, h^Y)$ such that $h^X: X_1 \to X_2$ and $h^Y: Y_1 \to Y_2$ satisfy:
\begin{itemize}
    \item $h^X(q_1) = q_2$;
    \item for all $x \in X_1, y \in Y_1, a \in \Sigma$: 
    $h^X(\rho_1(x, a)) = \rho_2(h^X(x), a)$ and \\
    \phantom{xxx} \hfill $h^Y(\sigma_1(x, a)) = \sigma_2(h^X(x), a)$ and $h^Y(\xi_1(y, a)) = \xi_2(h^Y(y), a)$;
    \item  for all $y \in Y_1$: $y \in F_1 \iff h^Y(y) \in F_2$.
\end{itemize}

\begin{remark}
    Consider the endofunctor $G$ on $\Set \times \Set$ defined by $G(X, Y)$ $\coloneqq \langle X^\Sigma \times Y^\Sigma, Y^\Sigma \times 2 \rangle$ on objects \cite{CianciaVenema2012StreamAutomataAreCoalgebras}. Lasso automata are $G$-coalgebras, together with an initial state. Lasso automaton morphisms coincide with initial-state-preserving $G$-coalgebra morphisms.
\end{remark}

In order to capture lasso languages of the form $\{ (u, v) \mid uv^\omega \in L \}$ for an $\omega$-regular language $L$, \cite{CianciaVenema2012StreamAutomataAreCoalgebras} introduces a subclass of lasso automata called $\Omega$-automata. 

\begin{definition}[$\Omega$-automaton {\cite{CianciaVenema2012StreamAutomataAreCoalgebras}}]
    An \emph{$\Omega$-automaton} is a lasso automaton $A = (X, Y, q,\rho, \sigma, \xi, F)$ that satisfies the following two conditions.
    \begin{description}
        \item[Circularity] For all $x \in X, av \in \newords, k > 0$:\\
        $\xi(\sigma(x, a),v) \in F \iff \xi(\sigma(x, a),v(av)^k) \in F$.
        \item[Coherence] For all $x \in X, abv \in \newords$:\\  
        $\xi(\sigma(x, a),bv) \in F \iff \xi(\sigma(\rho(x, a),b),va) \in F$.
    \end{description}
\end{definition}

It is shown in \cite{CianciaVenema2012StreamAutomataAreCoalgebras} that for any $\Omega$-automaton $A$, the language $\Lasso(A)$ is \emph{saturated}. Furthermore, $\Omega$-automata accept precisely the languages of the form $\{ (u, v) \mid uv^\omega \in L \}$ for an $\omega$-regular language $L$.

\begin{example}
    In \Cref{fig:lassoAut}, the automaton $A_2$ is an $\Omega$-automaton, and its  corresponding $\omega$-regular language is $(a+b)^*ba^\omega$. The automaton $A_1$ is circular, but not coherent, because $\xi(\sigma(x,b),a) = y_2 \in F$, $\xi(\sigma(\rho(x, b),a),b) = y_1 \notin F$.
\end{example}

\subsection{Wilke Algebras}
Another approach to characterising the ultimately periodic fragments of $\omega$-regular languages is via recognition by Wilke algebra homomorphisms \cite{Wilke93AlgTheoryForRegLanguagesFinInf} (see also \cite[Section~2.5]{PerrinPin2004InfiniteWords}).

\begin{definition}[Wilke algebra {\cite{Wilke93AlgTheoryForRegLanguagesFinInf}}]\label{def:Wilke-alg}
    A \emph{Wilke algebra} is a two-sorted algebra of the form $W = (\fint W, \inft W, \cdot, \times, (-)^\omega)$, where $\fint W, \inft W$ are sets equipped with the operations:
    \begin{equation*}
        \cdot: \fint W \times \fint W \to \fint W, \qquad
        \times: \fint W \times \inft W \to \inft W,
        \qquad
        (-)^\omega: \fint W \to \inft W,
    \end{equation*}
    satisfying the axioms:
    \begin{align*}
        (s \cdot t) \cdot u &= s \cdot (t \cdot u), &
        s \times (t \times \alpha) &= (s \cdot t) \times \alpha, \\
        (s^n)^\omega &= s^\omega, &
        s \times (t \cdot s)^\omega &= (s \cdot t)^\omega.
    \end{align*}
    for all $s, t \in \fint W$, $\alpha \in \inft W$. The axioms in the second line are called \emph{circularity} and \emph{coherence}, respectively.
\end{definition}
If no confusion arises, we write $W = (\fint W, \inft W)$, i.e., we omit the operations.
A \emph{Wilke algebra homomorphism} between $W_1$ and $W_2$ is a pair $f = (\fint f, \inft f)$ of maps $\fint f: \fint W_1 \to \fint W_2$ and $\inft f: \inft W_1 \to \inft W_2$ that preserves the operations $\cdot$, $\times$ and $(-)^\omega$. That is:
\begin{align*}
    \fint f(s \cdot t) &= \fint f(s) \cdot \fint f(t), &
    \inft f(s \times \alpha) &= \fint f(s) \times \inft f(\alpha), \\
    \inft f(s^{\omega}) &= (\fint f(s))^{\omega}.
\end{align*}
The freely generated Wilke algebra with generators $(\Sigma,\emptyset)$ is $(\newords, \upwords)$, where $\cdot$ is finite-word concatenation, $\times$ is finite-infinite-word concatenation, and $(-)^\omega$ is infinite power. Given a Wilke algebra $W$ and a homomorphism $f: (\newords, \upwords) \to W$, we say $f$ \emph{recognises} a language $L$ of ultimately periodic words if $L = (\inft f)\inv(P)$ for some \emph{recognising subset} $P \subseteq \inft W$, and we write $L = \UP(W, f, P)$. The languages recognised by homomorphisms into \emph{finite} Wilke algebras are precisely the languages of the form $\{ uv^\omega \mid uv^\omega \in L \}$ for an $\omega$-regular $L$.


\section{Algebraic Recognition of Lasso Languages}
\label{sec:algRecognitionLassoLanguages}
In this section, we introduce \emph{lasso semigroups} as generalisations of Wilke algebras, and show that homomorphisms into \emph{finite} lasso semigroups recognise precisely the regular lasso languages. We do this by defining mappings transforming a lasso automaton into a surjective lasso semigroup homomorphism with a recognising set, and vice versa.

\subsection{Lasso Semigroups}
\emph{Lasso semigroups} are obtained by omitting the circularity and coherence axioms of Wilke algebras. We show that the freely generated lasso semigroup over an alphabet $\Sigma$ consists of $\newords$ as its first sort and $\lassos$ as its second sort. This allows us to define recognition of lasso languages via lasso semigroups, analogously to language recognition by Wilke algebras.
\begin{definition}[Lasso semigroup]
\label{def:lassoAlgebra}
    A \emph{lasso semigroup} has the same type as a Wilke algebra $W = (\fint W, \inft W, \cdot, \times, (-)^\omega)$,
    but the circularity and coherence axioms need not be satisfied (cf.~\Cref{def:Wilke-alg}).
    A \emph{lasso semigroup homomorphism} preserves operations in the same way as Wilke algebra homomorphisms.
\end{definition}

From the above definition it follows that Wilke algebras are a full subcategory of lasso semigroups, with their homomorphisms.

\begin{remark}\label{rem:left-action}
    A lasso semigroup is, equivalently, a semigroup $\fint W$ acting on a set $\inft W$ by $\times$, together with a function $(-)^\omega: \fint W \to \inft W$.
\end{remark}

\begin{apxpropositionrep}
    The free lasso semigroup generated by $(\Sigma, \emptyset)$ is (isomorphic to) $(\newords, \lassos)$, where for every $u, v \in \newords$ and $w \in \words$:
    \begin{equation*}
        u \cdot v \coloneqq uv, \qquad
        u \times (w, v) \coloneqq (uw, v), \qquad
        u^\omega \coloneqq (\epsilon, u).
    \end{equation*}
\end{apxpropositionrep}
\begin{proofsketch}
    Suppose $(\fint W, \inft W)$ is a lasso semigroup and $f_0: \Sigma \to \fint W$ is a function. Then $f_0$ can be uniquely extended to a homomorphism $f: (\newords, \lassos) \to (\fint W, \inft W)$ as follows: $\fint f(a_1 \dotsc a_n) \coloneqq f_0(a_1) \cdot \dotsc \cdot f_0(a_n)$ and $\inft f(u,v) \coloneqq \fint f(u) \times \big( \fint f(v) \big)^\omega$. \qed
\end{proofsketch}
\begin{proof}
    The fact that $(\newords, \lassos)$ is a lasso semigroup follows directly from associativity of word concatenation. In order to prove that it is freely generated from $(\Sigma, \emptyset)$, suppose $(\fint W, \inft W)$ is a lasso semigroup and $f_0: \Sigma \to \fint W$ is a function. We show $f_0$ can be uniquely extended to a homomorphism $f: (\newords, \lassos) \to (\fint W, \inft W)$. Given $a_1, \dotsc, a_n \in \newords$ and $(u,v) \in \lassos$, define:
    \begin{align*}
        \fint f(a_1 \dotsc a_n) &\coloneqq f_0(a_1) \cdot \dotsc \cdot f_0(a_n), \\
        \inft f(u,v) &\coloneqq \fint f(u) \times \big( \fint f(v) \big)^\omega.
    \end{align*}
    It is straightforward to verify that $(\fint f, \inft f)$ is a homomorphism and that every homomorphism that extends $f_0$ coincides with $(\fint f, \inft f)$.
    \qed
\end{proof}

Now, analogously to Wilke algebras, given a lasso semigroup homomorphism $(\fint f, \inft f): (\newords, \lassos) \to (\fint W, \inft W)$ and a set $P \subseteq \inft W$, we have that $(\inft f )\inv(P)$ is a lasso language. We say that $(\fint f, \inft f)$ \emph{recognises} $(\inft f )\inv(P)$ via $P$. Note that for every homomorphism $(\fint f, \inft f)$, there exists a surjective homomorphism that recognises the same languages. Indeed, the codomain restriction $(\fint f, \inft f): (\newords, \lassos) \twoheadrightarrow (\Im(\fint f), \Im(\inft f))$ recognises the same languages. Hence in the next definition we only consider surjective homomorphisms.

\begin{definition}[Extended lasso semigroup]
    An \emph{extended lasso semigroup} is a triple $(W, f, P)$ where $W$ is a lasso semigroup, $f: (\newords, \lassos) \twoheadrightarrow W$ is a surjective homomorphism and $P \subseteq \inft W$. We call $(W, f, P)$ \emph{finite} if $W$ is finite. The \emph{lasso language recognised by $(W, f, P)$} is the set $\Lasso(W, f, P)$ $\coloneqq (\inft f)\inv(P)$.
\end{definition}

\begin{remark}
Surjective homomorphisms $f: (\newords, \lassos) \twoheadrightarrow W$ are in 1-1 correspondence with 
congruences on $(\newords, \lassos)$ by taking kernels and quotient maps, respectively.
\end{remark}

In the remainder of this section, we show that the languages recognised by \emph{finite} extended lasso semigroups coincide with the regular lasso languages. Our strategy is to show that: (1) any finite extended lasso semigroup can be transformed into a finite lasso automaton that accepts the \emph{reverse language}; (2) any finite lasso automaton can be transformed into a finite extended lasso semigroup that recognises the \emph{reverse language}. The result then follows from the fact that a language is regular precisely when its reverse is regular (see \cite[Section~8.1]{Cruchten2022TopicsInOmegaAutomata}).

\subsection{From Lasso Semigroups to Lasso Automata}
We define a mapping $\Aut$ that sends an extended lasso semigroup $(W, f, P)$ to a lasso automaton $\Aut(W, f, P)$ accepting $L(W, f, P)^\rev$.

Recall from \Cref{rem:left-action} that a lasso semigroup $(\fint W,\inft W)$ can be seen as a left-action of the semigroup $\fint W$ on the set $\inft W$ via the operation $\times$. The lasso semigroup operations provide a natural way of defining a lasso automaton structure on its two-sorted carrier. This construction is similar to the classic construction of a transition structure from a semigroup $S$ with a semigroup morphism $f\colon \Sigma^+ \to S$ where the transitions are defined 
by $s \stackrel{a}{\longrightarrow} s \cdot f(a)$ \cite{Pin:MathematicalFoundationsOfAutomataTheory}. 
However, since $\times$ is a left-action, we define transitions by multiplying on the left rather than on the right as in the classic construction.    

\begin{definition}[$\Aut$]
\label{def:algAutomaton}
    For an extended lasso semigroup $(W, f, P)$, we define $\Aut(W,f,P)$ as $(\fint W \sqcup \{ \token \}, \inft W, \token, \rho, \sigma, \xi, P)$ where for all $t \in \fint W, \alpha \in \inft W$:
    \begin{itemize}
        \item $\rho(\token, a) \coloneqq \fint f(a) \quad \text{and} \quad \rho(t, a) \coloneqq \fint f(a) \cdot t$;
        \item $\sigma(\token, a) \coloneqq \fint f(a)^\omega \quad \text{and} \quad \sigma(t, a) \coloneqq (\fint f(a) \cdot t)^\omega$;
        \item $\xi(\alpha, a) \coloneqq \fint f(a) \times \alpha$.
    \end{itemize}
\end{definition}

\begin{remark}
    It is clear that if $(W, f, P)$ is finite, then $\Aut(W,f,P)$ is finite.
\end{remark}

Due to defining transitions by multiplying on the left, 
we have (by an easy induction argument) that for all $w \in \Sigma^+$, $\rho(\token,w) = \fint{f}(w^\rev)$.
Similar identities hold for $\sigma$ and $\xi$, and this is essentially the reason why $\Aut(W, f, P)$ accepts the reverse of $\Lasso(W, f, P)$ rather than $\Lasso(W, f, P)$ itself.

\begin{apxpropositionrep}
\label{prop:AutReversesLanguage}
    For every extended lasso semigroup $(W, f, P)$:
    \begin{equation*}
         \Lasso(\Aut(W, f, P)) = \Lasso(W, f, P)^\rev.
    \end{equation*}
\end{apxpropositionrep}
\begin{proofsketch}
    Let $\Aut(W,f,P) = (\fint W \sqcup \{ \token \}, \inft W, \token, \rho, \sigma, \xi, P)$.
    By definition, for all $(u,av) \in \lassos$:
    \[\begin{array}{rcl}
      (u,av) \in \Lasso(\Aut(W,f,P)) & \iff &  \xi(\sigma(\rho(\token, u), a), v) \in P, \text{ and } \\
      (u,av)^\rev \in \Lasso(W, f, P) & \iff & \inft f((u, av)^\rev) \in P 
    \end{array}
    \]
    The proof is completed by showing that:\\ 
    \begin{minipage}[b]{.9\textwidth}
    \begin{equation}\label{eq:algAutomatonReadInReverse}
    \text{ for all } (u,av) \in \lassos: \xi(\sigma(\rho(\token, u), a), v) = \inft f((u, av)^\rev).        
    \end{equation}        
    \end{minipage}
    \qed
\end{proofsketch}
\begin{proof}
    Let $\Aut(W,f,P) = (\fint W \sqcup \{ \token \}, \inft W, \token, \rho, \sigma, \xi, P)$.
    We have for all $(u,av) \in \lassos$:
    \[\begin{array}{rcl}
      (u,av) \in \Lasso(\Aut(W,f,P)) & \iff &  \xi(\sigma(\rho(\token, u), a), v) \in P, \text{ and } \\
      (u,av)^\rev \in \Lasso(W, f, P) & \iff &  \inft f((u, av)^\rev) \in P
    \end{array}
    \]
    The proof is completed by showing that : 
    \begin{equation}\label{eq:Aut-lang-rev}
    \text{for all } (u, av) \in \lassos: \xi(\sigma(\rho(\token, u), a), v) = \inft f((u, av)^\rev).        
    \end{equation}
    
    First, we show $\rho(\token, w) = \fint f(w^\rev)$ whenever $w \in \newords$. We proceed by induction on $|w| \geq 1$. For the base case, if $w = a$ for $a \in \Sigma$, we have:
    \begin{equation*}
        \rho(\token, w) = \rho(\token, a) = \fint f(a) = \fint f(w^\rev).
    \end{equation*}
    For the induction step, if $w = w'a$, for $|w'| \geq 1$, then:
    \begin{equation*}
        \rho(\token, w) = \rho(\rho(\token, w'), a) = \rho(\fint f((w')^\rev), a) = \fint f(a) \cdot \fint f((w')^\rev) = \fint f(w^\rev).
    \end{equation*}
    
    Second, we prove that if $\alpha \in \inft W$ and $w \in \newords$, then $\xi(\alpha, w) = \fint f(w^\rev) \times \alpha$. We proceed by induction on $|w| \geq 1$. For the base case, if $w = a$ for $a \in \Sigma$, we have:
    \begin{equation*}
        \xi(\alpha, w) = \fint f(a) \times \alpha = \fint f (w^\rev) \times \alpha.
    \end{equation*}
    For the induction step, if $w = w'a$ for $|w'| \geq 1$, then:
    \begin{multline*}
        \xi(\alpha, w) = \xi(\xi(\alpha, w'), a) = \xi(\fint f((w')^\rev) \times \alpha, a) = \fint f(a) \times (\fint f((w')^\rev) \times \alpha) = \\ = \fint f(a \cdot (w')^\rev) \times \alpha = \fint f(w^\rev) \times \alpha.
    \end{multline*}
    
    Finally, we show \eqref{eq:Aut-lang-rev}, i.e., that for all $(u, av) \in \lassos$: $\xi(\sigma(\rho(\token, u), a), v) = \inft f(v^\rev, au^\rev)$. 
    We first evaluate $\sigma(\rho(\token, u),a)$. If $u = \epsilon$, then:
    \begin{equation*}
        \sigma(\rho(\token, u),a) = \sigma(\token, a) = \fint f(a)^\omega = \inft f(a^\omega) = \inft f(\epsilon, a) = \inft f(\epsilon, au^\rev).
    \end{equation*}
    Otherwise, we use $\rho(\token, u) = \fint f(u^\rev)$ to obtain:
    \begin{equation*}
        \sigma(\rho(\token, u),a) = \sigma(\fint f(u^\rev), a) = (\fint f(a) \cdot \fint f(u^\rev))^\omega = \inft f(\epsilon, au^\rev)).
    \end{equation*}
    Now we evaluate $\xi(\sigma(\rho(\token, u),a), v)$. If $v = \epsilon$, we have:
    \begin{equation*}
        \xi(\sigma(\rho(\token, u),a), v) = \sigma(\rho(\token, u),a) = \inft f(\epsilon, au^\rev) = \inft f(v^\rev, au^\rev).
    \end{equation*}
    And in case $|v| \geq 1$:
    \begin{multline*}
        \xi(\sigma(\rho(\token, u),a), v) = \xi(\inft f(\epsilon, au^\rev), v) = \fint f(v^\rev) \times \inft f(\epsilon, au^\rev) = \\ = \inft f(v^\rev, au^\rev). \quad \qed
    \end{multline*}
\end{proof}

\subsection{From Lasso Automata to Lasso Semigroups}
We now describe a converse transformation, i.e., a mapping $\Alg$ sending a lasso automaton $A = (X, Y, q, \rho, \sigma, \xi, F)$ to an extended lasso semigroup.
Cruchten \cite[Ch.~5]{Cruchten2022TopicsInOmegaAutomata} gives a construction of a Wilke algebra from an $\Omega$-automaton
which can be seen as a generalisation of the classic construction of a transition semigroup from a finite automaton.
In the construction in ibid., an element of the algebra represents paths in the automaton corresponding to a word. Our construction $\Alg$ is a variation of Cruchten's idea, with the crucial difference that here paths are reversed. The choice of $\Alg$ is justified in \Cref{sec:adjunctionLassoAutLassoAlg}, where we show that $\Alg$ is the (unique) right adjoint of $\Aut$ (\Cref{prop:AutAlgAdjunction}).

As the carrier of the algebra we take $U_A \coloneqq (X^X \times Y^X \times Y^Y, Y)$. That is, elements of $\fint U_A$ are triples $(\alpha, \beta, \gamma)$, where $\alpha$ encodes (the endpoints of) $\rho$-paths, $\beta$ encodes $\rho$-paths with a single $\sigma$-transition at the end, and $\gamma$ encodes $\xi$-paths. Elements $y \in \inft U$ represent the state reached after reading some lasso \emph{in reverse}, starting from $q$. Before defining the operations on $U_A$, it is insightful to see what the desired homomorphism $f_A: (\newords, \lassos) \to U_A$ is:
\begin{align}
    \fint f_A(av) &= (\lambda x . \rho(x, v^\rev a), \lambda x. \sigma(\rho(x, v^\rev), a), \lambda y . \xi(y, v^\rev a)), \label{eq:fFin} \\
    \inft f_A(u, av) &= \xi(\sigma(\rho(q, v^\rev ),a), u^\rev). \label{eq:fInf}
\end{align}
In fact, in defining the operations on $U_A$, we are guided by the goal of ensuring $f_A$ becomes a homomorphism. The fact that our construction reverses the language will follow from the form of $f_A$.

\begin{definition}
\label{def:automatonAlgebraAmbientCarrier}
    Let $A = (X, Y, q, \rho, \sigma, \xi, F)$ be a lasso automaton. Define the algebraic structure $U_A \coloneqq (X^X \times Y^X \times Y^Y, Y)$ with the following operations:
    \begin{align*}
        (\alpha_1, \beta_1, \gamma_1) \cdot (\alpha_2, \beta_2, \gamma_2) &\coloneqq (\alpha_1 \alpha_2, \beta_1 \alpha_2, \gamma_1 \gamma_2), \\
        (\alpha, \beta, \gamma)^\omega &\coloneqq \beta(q), \\
        (\alpha, \beta, \gamma) \times y &\coloneqq \gamma(y),
    \end{align*}
    for each $\alpha_i \in X^X$, $\beta_i \in Y^X$, $\gamma_i \in Y^Y$, $y \in Y$ (here $\alpha_1 \alpha_2$ denotes $\alpha_1 \circ \alpha_2$).
\end{definition}

\begin{apxpropositionrep}
    The structure defined in \Cref{def:automatonAlgebraAmbientCarrier} is a lasso semigroup.
\end{apxpropositionrep}
\begin{proof}
     For all $(\alpha_i, \beta_i, \gamma_i) \in \fint U_A$, where $i \in \{1,2,3\}$, and $y \in \inft U_A$:
     \begin{align*}
         \big((\alpha_1, \beta_1, \gamma_1) \cdot (\alpha_2, \beta_2, \gamma_2)\big) \cdot (\alpha_3, \beta_3, \gamma_3) &= (\alpha_1 \alpha_2, \beta_1 \alpha_2, \gamma_1 \gamma_2) \cdot (\alpha_3, \beta_3, \gamma_3)\\ &= (\alpha_1 \alpha_2 \alpha_3, \beta_1 \alpha_2 \alpha_3, \gamma_1 \gamma_2 \gamma_3) \\ &= (\alpha_1, \beta_1, \gamma_1) \cdot (\alpha_2 \alpha_3, \beta_2 \alpha_3, \gamma_2 \gamma_3) \\ &= (\alpha_1, \beta_1, \gamma_1) \cdot \big((\alpha_2, \beta_2, \gamma_2) \cdot (\alpha_3, \beta_3, \gamma_3)\big) \\
         \big((\alpha_1, \beta_1, \gamma_1) \cdot (\alpha_2, \beta_2, \gamma_2)\big) \times y &= (\alpha_1 \alpha_2, \beta_1 \alpha_2, \gamma_1 \gamma_2) \times y \\ &= \gamma_1 \gamma_2(y) \\ &= (\alpha_1, \beta_1, \gamma_1) \times \gamma_2(y) \\ &= (\alpha_1, \beta_1, \gamma_1) \times \big((\alpha_2, \beta_2, \gamma_2) \times y\big). \hspace{14px} \qed
     \end{align*}
\end{proof}

\begin{apxpropositionrep}
    Let $A = (X, Y, q, \rho, \sigma, \xi, F)$ be a lasso automaton and $f_A:(\newords,$ $\lassos) \to U_A$ be defined by \Cref{eq:fFin,eq:fInf}. Then $f_A$ is a lasso semigroup homomorphism.
\end{apxpropositionrep}
\begin{proof}
    For all $au, bv \in \newords$ and $w \in \words$:
    \begin{align*}
        \fint f_A(au) \cdot \fint f_A(bv) &=
        (\lambda x.\rho(x, u^\rev a), \lambda x. \sigma(\rho(x, u^\rev), a), \lambda y. \xi(y, u^\rev a)) \;\cdot \\
        &\hspace{13px}(\lambda x.\rho(x, v^\rev b), \lambda x. \sigma(\rho(x, v^\rev), b), \lambda y. \xi(y, v^\rev b)) \\
        &= (\lambda x.\rho(x, v^\rev b u^\rev a), \lambda x. \sigma(\rho(x, v^\rev b u^\rev), a), \lambda y. \xi(y, v^\rev b u^\rev a)) \\
        &= \fint f_A(aubv) = \fint f_A(au \cdot bv), \\
        \fint f_A(au)^\omega &= (\dotsc, \lambda x. \sigma(\rho(x, u^\rev),a), \dotsc)^\omega = \sigma(\rho(q, u^\rev),a) \\
        &= \inft f_A(\epsilon, au) = \inft f_A((au)^\omega), \\
        \fint f_A(au) \times \inft f_A(w, bv) &= (\dotsc, \dotsc, \lambda y. \xi(y, u^\rev a)) \times \xi(\sigma(\rho(q,v^r),b),w^\rev) \\
        &= \xi(\sigma(\rho(q, v^\rev), b),w^\rev u^\rev a)
        = \inft f_A(auw, bv) \\
        &= \inft f_A(au \times (w, bv)).
        \hspace{169px} \qed
    \end{align*}
\end{proof}

Note that $f_A$ is not surjective, but we can define the desired extended lasso semigroup by taking the image of $f_A$.

\begin{definition}[$\Alg$]
    Given a lasso automaton $A = (X, Y, q, \rho, \sigma, \xi, F)$, we define $\Alg(A)$ $\coloneqq (W_A, f_A, F)$, where $W_A$ is the image of $f_A$ in $U_A$.
\end{definition}

\begin{remark}
    It follows immediately that if $A$ is finite, then $\Alg(A)$ is finite.
\end{remark}

\begin{proposition}
\label{prop:AlgReversesLanguage}
    For every lasso automaton $A = (X, Y, q, \rho, \sigma, \xi, F)$:
    \begin{equation*}
         \Lasso(\Alg(A)) = \Lasso(A)^\rev.
    \end{equation*}
\end{proposition}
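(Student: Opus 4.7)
The plan is to reduce both sides of the claimed equality to the same automaton-level condition on the state $\xi(\sigma(\rho(q, v^\rev), a), u^\rev)$, and observe that they are equivalent definitionally.

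Concretely, I would fix a lasso $(u, av) \in \lassos$ and unfold the left-hand side first. By definition of language recognition by an extended lasso semigroup, $(u, av) \in \Lasso(\Alg(A)) = (\inft f_A)^{-1}(F)$ iff $\inft f_A(u, av) \in F$. Applying formula \eqref{eq:fInf} rewrites this condition as $\xi(\sigma(\rho(q, v^\rev), a), u^\rev) \in F$. Next, I would unfold the right-hand side: by definition of lasso reversal, $(u, av) \in \Lasso(A)^\rev$ iff $(u, av)^\rev = (v^\rev, au^\rev) \in \Lasso(A)$, which by the definition of acceptance for the lasso automaton $A$ is precisely $\xi(\sigma(\rho(q, v^\rev), a), u^\rev) \in F$. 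Chaining the two equivalences gives the claim.

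The bulk of the real work has already been done in setting up $f_A$ so that formula \eqref{eq:fInf} holds (and in verifying that $f_A$ is indeed a lasso semigroup homomorphism in the preceding proposition). If one wanted a fully self-contained argument, one could re-derive \eqref{eq:fInf} from \eqref{eq:fFin} by using the universal property of the free lasso semigroup $(\newords, \lassos)$: $\inft f_A(u, av) = \fint f_A(u) \times \fint f_A(av)^\omega$, and unfolding the operations $\times$ and $(-)^\omega$ on $U_A$ yields $\gamma_u(\beta_{av}(q))$, which by \eqref{eq:fFin} is exactly $\xi(\sigma(\rho(q, v^\rev), a), u^\rev)$.

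There is essentially no obstacle beyond careful bookkeeping of the reversal: one must track that the distinguished loop letter $a$ remains the $\sigma$-transition letter, that the reversed loop body $v^\rev$ is read by $\rho$ from the initial state, and that the reversed stem $u^\rev$ is read by $\xi$ after the $\sigma$-transition. Once formula \eqref{eq:fInf} is available, both sides literally collapse to the same membership condition, and the proof is a one-line chain of equivalences.
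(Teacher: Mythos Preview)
Your proposal is correct and essentially identical to the paper's own proof: both unfold the definition of $\Lasso(\Alg(A))$, invoke \eqref{eq:fInf} to rewrite $\inft f_A(u,av)\in F$ as $\xi(\sigma(\rho(q,v^\rev),a),u^\rev)\in F$, and recognise this as the acceptance condition for $(v^\rev,au^\rev)=(u,av)^\rev$ in $A$. Your optional re-derivation of \eqref{eq:fInf} via the free-algebra identity $\inft f_A(u,av)=\fint f_A(u)\times \fint f_A(av)^\omega$ needs a small caveat for $u=\epsilon$ (where $\fint f_A(\epsilon)$ is undefined and one uses $\inft f_A(\epsilon,av)=\fint f_A(av)^\omega$ directly), but the core argument is exactly the paper's.
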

\begin{proof}
    Suppose $\Alg(A) = (W, f, P)$. We have that $\Lasso(W, f, P)$ consists of all lassos $(u, av)$ such that $\inft f(u,av) \in P$. By \Cref{eq:fInf}, this is equivalent to $\xi(\sigma(\rho(s, v^\rev), a), u^\rev) \in P = F$, i.e., $(v^\rev, au^\rev) \in \Lasso(A)$. Hence $\Lasso(W, f, P)$ $=\Lasso(A)^\rev$.
    \qed
\end{proof}

\subsection{Finite Lasso Semigroups Recognise Regular Lasso Languages}

From \Cref{prop:AutReversesLanguage} and \Cref{prop:AlgReversesLanguage} it follows that the languages recognised by finite extended lasso semigroups are the reverse of regular lasso languages. In order to conclude that finite extended lasso semigroups recognise regular lasso languages, it remains to show that $L$ is regular if and only if $L^\rev$ is regular. This follows from the fact that, analogously to DFAs, every lasso automaton can be reversed. The construction is described in \cite[Section~8.1]{Cruchten2022TopicsInOmegaAutomata}. States in the reversed automaton are sets of states of the original automaton, while transitions correspond to taking preimages of the original transition functions. We include the definition here, since it will be used in \Cref{sec:adjunctionLassoAutLassoAlg,sec:adjunctionOmegaAutomata}.

\begin{definition}[Reverse lasso automaton {\cite[Def.~8.17]{Cruchten2022TopicsInOmegaAutomata}}]
\label{def:reverseLassoAutomaton}
    Let $A = (X, Y, q,$ $\rho,\sigma, \xi,F)$ be a lasso automaton. Define the reverse automaton $\Rev(A) \coloneqq (2^Y, 2^X,$ $F, \hat \xi, \hat \sigma, \hat \rho, \{S \mid q \in S\})$, where, for $\delta \in \{ \rho, \sigma, \xi \}$, $\hat \delta$ is defined as:
    \begin{equation*}
        \hat \delta(S, a) \coloneqq \{ z \mid \delta(z, a) \in S \}.
    \end{equation*}
\end{definition}

\begin{proposition}[{\cite[Prop.~8.22]{Cruchten2022TopicsInOmegaAutomata}}]
    Let $A$ be a lasso automaton. Then $\Lasso(\Rev$ $(A)) = \Lasso(A)^\rev$.
\end{proposition}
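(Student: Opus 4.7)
The plan is to unfold both acceptance conditions until they reduce to the same equation in $A$. The main ingredient is a small lemma describing how the preimage transitions $\hat\rho$ and $\hat\xi$ iterate on words. Specifically, I claim that for $\delta \in \{\rho, \xi\}$, any subset $S$ of the appropriate type, and any $w \in \words$,
\[
\hat\delta[S](w) = \{z \mid \delta(z, w^\rev) \in S\}.
\]
This follows by induction on $|w|$. The base case $w = \epsilon$ is immediate. For the step, writing $w = w'a$, the extension convention gives $\hat\delta[S](w'a) = \hat\delta[\hat\delta[S](w')](a)$; applying the definition of $\hat\delta$ on a letter, followed by the induction hypothesis, yields $\{z \mid \delta(\delta(z,a), (w')^\rev) \in S\}$, which equals $\{z \mid \delta(z, a(w')^\rev) \in S\} = \{z \mid \delta(z, (w'a)^\rev) \in S\}$.

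With this lemma in hand, I would unfold the acceptance condition in $\Rev(A)$. By \Cref{def:reverseLassoAutomaton}, a lasso $(u, av)$ lies in $\Lasso(\Rev(A))$ iff $q \in \hat\rho\bigl[\hat\sigma[\hat\xi[F](u)](a)\bigr](v)$. Applying the lemma to the outer $\hat\rho$ turns this into $\rho(q, v^\rev) \in \hat\sigma[\hat\xi[F](u)](a)$; applying the definition of $\hat\sigma$ on the single letter $a$ gives $\sigma(\rho(q, v^\rev), a) \in \hat\xi[F](u)$; and applying the lemma to $\hat\xi$ finally yields $\xi(\sigma(\rho(q, v^\rev), a), u^\rev) \in F$. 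This last condition is precisely the unfolding of $(v^\rev, au^\rev) \in \Lasso(A)$, i.e., $(u, av)^\rev \in \Lasso(A)$, which is what we wanted.

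The one aspect requiring some care, and where I expect any slip to occur, is tracking which component of $\Rev(A)$ plays which role in the lasso-automaton signature: the ``new'' $\rho$-transition is $\hat\xi$, the ``new'' $\sigma$-transition is $\hat\sigma$, the ``new'' $\xi$-transition is $\hat\rho$, the initial state is the old accepting set $F$, and acceptance is ``contains $q$''. Once this bookkeeping is fixed, the argument is essentially the classical DFA reversal proof, adapted to handle both the split carrier $X \sqcup Y$ and the bridging transition $\sigma$, and no further inductions or case analyses are needed.
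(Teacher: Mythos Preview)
Your proof is correct. The paper does not prove this proposition itself but cites \cite[Prop.~8.22]{Cruchten2022TopicsInOmegaAutomata}; however, in the appendix proof of \Cref{prop:RevCircularityCoherence} the paper extracts from Cruchten's argument the identities
\[
\hat\rho\bigl[\hat\sigma[\hat\xi[F](v)](a)\bigr](w) = \{x \in X \mid \xi(\sigma(\rho(x, w^\rev),a), v^\rev) \in F\}
\]
and its acceptance-form variant, which are exactly what your word-level preimage lemma produces, so your approach coincides with the one underlying the cited proof.
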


We can now state our algebraic characterisation of regular lasso languages.
\begin{theorem}
\label{thm:regularLanguageCharViaLassoAlg}
    A lasso language $L$ is recognised by a finite extended lasso semigroup if and only if $L$ is regular.
\end{theorem}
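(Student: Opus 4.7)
The plan is to reduce both directions of the equivalence to the composition of three facts proved earlier in the section: (i) the mapping $\Aut$ sends a finite extended lasso semigroup to a finite lasso automaton while reversing the language (\Cref{prop:AutReversesLanguage}); (ii) the mapping $\Alg$ sends a finite lasso automaton to a finite extended lasso semigroup while reversing the language (\Cref{prop:AlgReversesLanguage}); (iii) the reverse-automaton construction $\Rev$ from \Cref{def:reverseLassoAutomaton} preserves finiteness and accepts the reverse language. Together (iii) gives that the class of regular lasso languages is closed under reversal, since if $A$ is a finite lasso automaton accepting $L$ then $\Rev(A)$ has finite state spaces $2^X$ and $2^Y$ and accepts $L^\rev$.

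For the ``only if'' direction, I would start with a finite extended lasso semigroup $(W,f,P)$ recognising $L$. Applying $\Aut$ yields a finite lasso automaton $\Aut(W,f,P)$ whose accepted language is $L^\rev$ by \Cref{prop:AutReversesLanguage}. Hence $L^\rev$ is regular, and by the closure of regularity under reversal noted above, $L$ is regular.

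For the ``if'' direction, suppose $L$ is regular, so that $L^\rev$ is also regular, say accepted by a finite lasso automaton $A'$. Then $\Alg(A')$ is a finite extended lasso semigroup, and by \Cref{prop:AlgReversesLanguage} it recognises $\Lasso(A')^\rev = (L^\rev)^\rev = L$. This exhibits $L$ as recognised by a finite extended lasso semigroup.

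There is no real obstacle here: all heavy lifting has been carried out in the preceding propositions about $\Aut$, $\Alg$, and $\Rev$. The only thing to be careful about is applying reversal the correct number of times in each direction, and noting that since $\Aut$ and $\Alg$ both reverse the language, one composition with $\Rev$ (equivalently, one extra reversal of the starting language) is needed to realign the languages, which is exactly what the closure of regular lasso languages under reversal provides.
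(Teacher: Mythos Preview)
Your proposal is correct and essentially matches the paper's proof. The only cosmetic difference is that the paper writes the two directions directly as the compositions $\Rev\circ\Aut$ and $\Alg\circ\Rev$ applied to the given structure, whereas you factor through the intermediate statement that regular lasso languages are closed under reversal; since you justify that closure via $\Rev$ anyway, the arguments are the same.
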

\begin{proof}
    Suppose $L = \Lasso(W, f, P)$ for some finite extended lasso semigroup $(W,f,$ $P)$. Then $L = \Lasso(\Rev(\Aut(W, f, P))$, where $\Rev(\Aut(W, f, P))$ is a finite lasso automaton, thus $L$ is regular. Conversely, if $L = \Lasso(A)$ for some finite lasso automaton, then $L = \Lasso(\Alg(\Rev(A)))$, where $\Alg(\Rev(A))$ is a finite extended lasso semigroup.
    \qed
\end{proof}

Noting that $(W,f,P)$ recognises $L$ iff $(W,f,\inft{W}\setminus P)$ recognises the complement of $L$, 
\Cref{thm:regularLanguageCharViaLassoAlg} implies 
that regular lasso languages are closed under complementation. This was already proved using automata in \cite{CianciaVenema2012StreamAutomataAreCoalgebras}, but the algebraic argument is immediate.\footnote{To prove closure under union and intersection, we would additionally need to consider limits of lasso semigroups.}

\begin{corollary}
    Regular lasso languages are closed under complementation.
\end{corollary}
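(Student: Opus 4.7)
The plan is to execute the short argument already suggested in the paragraph preceding the corollary, turning it into a self-contained proof using \Cref{thm:regularLanguageCharViaLassoAlg}.

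First I would take an arbitrary regular lasso language $L \subseteq \lassos$. By the ($\Rightarrow$)-direction of \Cref{thm:regularLanguageCharViaLassoAlg}, pick a finite extended lasso semigroup $(W,f,P)$ such that $L = \Lasso(W,f,P) = (\inft f)^{-1}(P)$.

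Next I would consider the triple $(W,f,\inft W \setminus P)$. It is again a finite extended lasso semigroup, since $W$ is still the same finite lasso semigroup, $f\colon (\newords,\lassos) \twoheadrightarrow W$ is still the same surjective homomorphism, and $\inft W \setminus P$ is a subset of $\inft W$. Because $\inft f$ is a function, taking preimages commutes with complements, so
\begin{equation*}
    \Lasso(W,f,\inft W \setminus P)
    = (\inft f)^{-1}(\inft W \setminus P)
    = \lassos \setminus (\inft f)^{-1}(P)
    = \lassos \setminus L.
\end{equation*}
Finally, applying the ($\Leftarrow$)-direction of \Cref{thm:regularLanguageCharViaLassoAlg} to $(W,f,\inft W \setminus P)$ yields that $\lassos \setminus L$ is regular, which is the desired closure under complementation.

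There is no real obstacle here: the entire argument rests on the fact that complementing the recognising set in an extended lasso semigroup produces a recognisor for the set-theoretic complement of the original language, which in turn exploits that the definition of recognition is symmetric in $P$ and $\inft W \setminus P$. Hence the proof is essentially a one-line consequence of \Cref{thm:regularLanguageCharViaLassoAlg}.
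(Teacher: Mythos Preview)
Your argument is exactly the one the paper intends: pass to a finite extended lasso semigroup via \Cref{thm:regularLanguageCharViaLassoAlg}, replace $P$ by $\inft W \setminus P$, and apply the theorem again. The only slip is that you have the two directions of the biconditional swapped: going from ``$L$ regular'' to ``$L$ recognised by a finite extended lasso semigroup'' is the $(\Leftarrow)$-direction of the theorem as stated, and the final step uses $(\Rightarrow)$.
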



\section{Dual Adjunction Between Lasso Automata and Lasso Semigroups}
\label{sec:adjunctionLassoAutLassoAlg}
In the last section we introduced the mappings $\Aut$ and $\Alg$ as tools for characterising language recognition by finite extended lasso semigroups. In this section, we show that $\Aut$ and $\Alg$ also reveal the categorical relationship between the \emph{category of lasso automata} and the \emph{category of extended lasso semigroups}. More precisely, we show that $\Aut$ and $\Alg$ can be extended to a pair of adjoint functors. By composing this adjunction with the adjunction $\Rev \dashv \op\Rev$ proven in \cite[Section~8.1]{Cruchten2022TopicsInOmegaAutomata}, we arrive at a language-preserving dual adjunction between extended lasso semigroups and lasso automata. See Diagram \labelcref{diag:lassoAutAdjInformal}.

\subsection{Categories of Lasso Automata and Lasso Semigroups}
A natural notion of a morphism between extended lasso semigroups is a homomorphism that preserves the quotient structure and the recognising subset.

\begin{definition}[Category of extended lasso semigroups]
    Given two extended lasso semigroups $(W_i, f_i, P_i)$, an \emph{extended lasso semigroup morphism} $g: (W_1, f_1, P_1) \to (W_2, f_2, P_2)$ is a homomorphism $g: W_1 \to W_2$ such that $g \circ f_1 = f_2$ and $\alpha \in P_1 \iff \inft g(\alpha) \in P_2$, for all $\alpha \in \inft W_1$. We write $\ELAlg$ for the category of extended lasso semigroups and their morphisms.
\end{definition}

On the automaton side, we use the standard notion of automaton morphism (see \Cref{sec:preliminaries}). Apart from the category of all lasso automata, we define its full subcategory of reachable lasso automata. A restriction to reachable automata is necessary in \Cref{prop:AutAlgFunctorial} for ensuring $\Alg$ is functorial.

\begin{definition}[Categories of lasso automata]
Let $\LAut$ denote the category of lasso automata and lasso automata morphisms. Let $\RLAut$ denote the full subcategory of $\LAut$ of all reachable lasso automata.
\end{definition}

It follows from surjectivity of $f_1$ that there is at most one extended lasso semigroup morphism $g: (W_1, f_1, P_1) \to (W_2, f_2, P_2)$. Moreover, observe that if $A$ is a reachable lasso automaton, then there exists at most one morphism with domain $A$. That is:

\begin{lemma}
\label{lem:posetalCategories}
    $\RLAut$ and $\ELAlg$ are posetal categories.
\end{lemma}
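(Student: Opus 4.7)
The plan is to show that in each of the two categories, any parallel pair of morphisms must be equal, which is exactly the condition for being posetal. In both cases, uniqueness follows from the fact that morphisms are forced to agree on a distinguished ``generating'' part of the domain, and then structure preservation propagates the agreement to the whole object.

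For $\ELAlg$, suppose $g, g' : (W_1, f_1, P_1) \to (W_2, f_2, P_2)$ are two morphisms. By definition both satisfy $g \circ f_1 = f_2 = g' \circ f_1$, so $g$ and $g'$ agree on the image of $f_1$. Since $f_1 : (\newords, \lassos) \twoheadrightarrow W_1$ is surjective (in both sorts), this image is all of $W_1$, hence $g = g'$. No further use of the compatibility with $P_1, P_2$ is needed for uniqueness.

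For $\RLAut$, suppose $h, h' : A_1 \to A_2$ are two lasso automaton morphisms, with $A_i = (X_i, Y_i, q_i, \rho_i, \sigma_i, \xi_i, F_i)$. First, $h^X(q_1) = q_2 = (h')^X(q_1)$ by the initial-state condition. A straightforward induction on path length, using the commutation conditions $h^X \circ \rho_1(-, a) = \rho_2(-, a) \circ h^X$, $h^Y \circ \sigma_1(-, a) = \sigma_2(-, a) \circ h^X$ and $h^Y \circ \xi_1(-, a) = \xi_2(-, a) \circ h^Y$, then shows that $h$ and $h'$ agree on every state reachable from $q_1$ via some sequence of $\rho$-, $\sigma$- and $\xi$-transitions. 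Since $A_1$ is reachable, every state of $X_1$ and $Y_1$ is of this form, so $h = h'$.

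Neither argument is subtle; the only mild point to articulate carefully is the inductive propagation from the initial state through the two-sorted transition structure, distinguishing whether a state in $X_1$ is reached by a $\rho$-path and whether a state in $Y_1$ is reached by a $\rho$-path followed by one $\sigma$-transition and then $\xi$-transitions. Once this is observed, the result follows immediately, so the statement is best presented as a short direct argument rather than factored through any auxiliary lemma.
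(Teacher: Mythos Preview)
Your proof is correct and follows the same approach as the paper: the paper simply records, immediately before the lemma, that surjectivity of $f_1$ forces uniqueness in $\ELAlg$ and that reachability of $A$ forces uniqueness of morphisms out of $A$ in $\RLAut$. Your argument spells out precisely these two observations, including the inductive propagation from the initial state that the paper leaves implicit.
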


\subsection{Functoriality of \texorpdfstring{$\Aut$}{Aut} and \texorpdfstring{$\Alg$}{Alg}}
We begin with an example showing that $\Alg$ \emph{cannot be extended} to a functor $\LAut \to \ELAlg$.

\begin{example}
    Consider the lasso automata $A \coloneqq (\{x\}, \{y\}, x, \rho, \sigma, \xi, \emptyset)$ (where $\rho$, $\sigma$ and $\xi$ are uniquely determined by their types) and $A' = (\{x_1', x_2'\}, \{y'\}, x_1', \rho', \sigma',$ $\xi', \emptyset)$ with $\rho'(x_1', a) = \rho'(x_1', b) = \rho(x_2', a) = x_1'$, $\rho'(x_2', b) = x_2'$. The map $h = (h^X, h^Y)$ with $h^X(x) = x_1'$ and $h^Y(y) = y'$ is a lasso automaton morphism. However, there is no map from $\Alg(A) \coloneqq (W, f, P)$ to $\Alg(A') \coloneqq (W', f', P')$, because $\fint f(a) = \langle \{ x \mapsto x \}, \{ x \mapsto y \}, \{ y \mapsto y \} \rangle  = \fint f(b)$, but $\fint{(f')}(a) = \langle \{ x_1' \mapsto x_1', x_2' \mapsto x_1' \}, \dotsb, \dotsb \rangle \neq \langle \{ x_1' \mapsto x_1', x_2' \mapsto x_2' \}, \dotsb, \dotsb \} \rangle = \fint{(f')}(b)$.
\end{example}

Hence in order to obtain a functor $\Alg$, we need to restrict the domain $\LAut$. In the example above, the automaton $A'$ was not reachable, which gives us the idea to restrict the domain to $\RLAut$. Moreover, the next lemma shows that the codomain of $\Aut$ can also be restricted to $\RLAut$.

\begin{apxlemmarep}
\label{lem:AutSendsToReachable}
    Let $(W, f, P)$ be an extended lasso semigroup. Then $\Aut(W, f, P)$ is reachable.
\end{apxlemmarep}
\begin{proof}
    Suppose $Aut(W, f, P) = (X, Y, q, \rho, \sigma, \xi, F)$. Let $x \in X$, we show that $x = \rho(q, v)$ for some $v$ in $\words$. If $x = q$, then $x = \rho(q, \epsilon)$. Otherwise, $x \in \fint W$, so $x = \fint f(w)$ for some $w$ in $\newords$. If $w = a_1 \dotsc a_n$ for $a_1, \dotsc, a_n$ in $\Sigma$, then:
    \begin{equation*}
        x = \fint f(a_1 \dotsc a_n) = \fint f(a_1) \dotsc \fint f(a_n) = \rho(q, a_n \dotsc a_1).
    \end{equation*}
    
    Let $y \in Y$, we show that $y = \xi(\sigma(\rho(q, u), a), v)$ for some $(u,av) \in \lassos$. By surjectivity of $f$, we know $y = \inft f(\alpha)$ for some $\alpha$ in $\lassos$. Suppose $\alpha = (a_1 \dotsc a_m, b_1 \dotsc b_n)$ for some $a_1, \dotsc, a_m, b_1, \dotsc, b_m \in \Sigma$, where $m \geq 0$, $n > 0$. Then:
    \begin{equation*}
        y = \fint f(a_1) \times (\dotsc \fint f(a_{m-1}) \times (\fint f(a_m) \times \big(\fint f(b_1) \dotsc \fint f(b_n)\big)^\omega) \dotsc),
    \end{equation*}
    thus $y = \xi(\sigma(\rho(q, b_n \dotsc,b_2), b_1), a_m \dotsc a_1)$.
    \qed
\end{proof}

Since $\ELAlg$ is a posetal category (\Cref{lem:posetalCategories}), for any lasso automaton morphism $h: A_1 \to A_2$, there exists at most one candidate for $\Alg(h)$. Thus, in order to show functoriality of $\Alg$, it suffices to prove that such a candidate exists. Likewise for functoriality of $\Aut$.

We reduce existence of morphisms in $\ELAlg$ and $\RLAut$ to comparing certain equivalence relations on $\newords$ and $\lassos$.

\begin{definition}
\label{def:automatonLanguageCongruence}
    Let $A = (X, Y, q, \rho, \sigma, \xi, F)$ be a lasso automaton. We write:
    \begin{equation*}
        \chi_A(ua) \coloneqq (\lambda x . \rho(x, ua), \lambda x. \sigma(\rho(x, u), a), \lambda y . \xi(y, ua)).
    \end{equation*}
    Define the pair ${\autcong_A} = (\fincong_A, \infcong_A)$ of equivalence relations $\fincong_A$ on $\newords$ and $\infcong_A$ on $\lassos$ by:
    \begin{align*}
        u_1 \fincong_A u_2 &\iff \chi_A(u_1) = \chi_A(u_2) \\
        (v_1, a_1u_1) \infcong_A (v_2, a_2u_2) &\iff \xi(\sigma(\rho(q,v_1),a_1),u_1) = \xi(\sigma(\rho(q,v_2),a_2),u_2).
    \end{align*}
    We say that $\autcong_{A_1}$ \emph{refines} $\autcong_{A_2}$ if ${\fincong_{A_1}} \subseteq {\fincong_{A_2}}$ and ${\infcong_{A_1}} \subseteq {\infcong_{A_2}}$. Define $\autcong_A^\rev$ by:
    \begin{align*}
        u_1 \autcong_A^\rev u_2 &\iff u_1^\rev \autcong_A u_2^\rev,\\
        (v_1, a_1u_1) \autcong_A^\rev (v_2, a_2u_2) &\iff (v_1, a_1u_1)^\rev \autcong_A (v_2, a_2u_2)^\rev.
    \end{align*}
\end{definition}

Compare the definition of $\autcong_A$ to \Cref{eq:fFin,eq:fInf}. We have $u_1 \autcong_A u_2 \iff \fint f_A(u_1^\rev) = \fint f_A(u_2^\rev)$, and $(v_1, u_1) \autcong_A (v_2, u_2) \iff \inft f_A((v_1, u_1)^\rev) = \inft f_A((v_2, u_2)^\rev)$.
Furthermore, note that $\autcong_A$ resembles the relation from \cite[Def.~6.3]{Cruchten2022TopicsInOmegaAutomata} used for deriving a Myhill-Nerode theorem \cite[Th.~6.13]{Cruchten2022TopicsInOmegaAutomata} for $\Omega$-automata. There $u_1$ and $u_2$ are identified if $\rho(q, u_1) = \rho(q, u_2)$. Our $\fincong_A$ is more restrictive, 
since it considers all types of transitions $\rho, \sigma, \xi$, and all starting states.

\begin{definition}
\label{def:lassoSemLanguageCongruence}
    Let $(W,f,P)$ be an extended lasso semigroup. Define the pair ${\algsim_W} = (\finsim_W, \infsim_W)$ of equivalence relations $\finsim_W$ on $\newords$ and $\infsim_A$ on $\lassos$ where $\finsim_W$ is the kernel of $\fint f$ and $\infsim_W$ is the kernel of $\inft f$. Refinement and $\algsim_W^\rev$ are defined analogously to \Cref{def:automatonLanguageCongruence}.
\end{definition}

\begin{apxlemmarep}
\label{lem:morphismIfCongruenceRefinement}
    \begin{enumerate}
        \item Let $A_1$ and $A_2$ be reachable lasso automata. There exists an automaton morphism $h: A_1 \to A_2$ if and only if $\autcong_{A_1}$ refines $\autcong_{A_2}$ and $\Lasso(A_1) = \Lasso(A_2)$.
        \item Let $(W_1, f_1, P_1)$ and $(W_2, f_2, P_2)$ be extended lasso semigroups. There exists an extended lasso semigroup morphism $g: (W_1, f_1, P_1) \to (W_2, f_2, P_2)$ if and only if $\algsim_{W_1}$ refines $\algsim_{W_2}$ and $\Lasso(W_1,f_1,P_1) = \Lasso(W_2,f_2,P_2)$.
    \end{enumerate}
\end{apxlemmarep}
\begin{proof}
    \begin{enumerate}
        \item ($\Rightarrow$) Suppose $h: A_1 \to A_2$ is a lasso automaton morphism, where $A_i = (X_i, Y_i, q_i, \rho_i, \sigma_i, \xi_i, F_i)$. We have:
        \begin{multline*}
            u_1 \fincong_{A_1} u_2 \iff \chi_{A_1}(u_1) = \chi_{A_1}(u_2) \implies \chi_{A_2}(u_1) = \chi_{A_2}(u_2) \\
            \iff
            u_1 \fincong_{A_2} u_2.
        \end{multline*}
        Analogously, $(v_1, a_1u_1) \infcong_{A_1} (v_2, a_2u_2)$ implies $(v_1, a_1u_1) \infcong_{A_2} (v_2, a_2u_2)$. Hence $\autcong_{A_1}$ refines $\autcong_{A_2}$. Lastly:
        \begin{multline*}
            (v,av) \in \Lasso(A_1) \iff \xi_1(\sigma_1(\rho_1(q_1, v),a),u) \in F_1 \iff \\
            h^Y(\xi_1(\sigma_1(\rho_1(q_1, v),a),u)) \in F_2 \iff \xi_2(\sigma_2(\rho_2(q_2, v),a),u) \in F_2 \iff\\
            (v,au) \in \Lasso(A_2).
        \end{multline*}

        ($\Leftarrow$) Suppose $\autcong_{A_1}$ refines $\autcong_{A_2}$ and $\Lasso(A_1) = \Lasso(A_2)$. Define $h = (h^X, h^Y)$ as follows:
        \begin{align*}
            h^X(x) &\coloneqq \rho_2(q_2, u), & &\text{for some $u$ such that $x = \rho_1(q_1, u)$}, \\
            h^Y(y) &\coloneqq \xi_2(\sigma_2(\rho_2(q_2, v), a),u), &  &\text{for some $(v,au)$ such that} \\
            & & &\text{$y = \xi_1(\sigma_1(\rho_1(q_1, v), a),u)$}.
        \end{align*}
        Note that ${\fincong_{A_1}} \subseteq {\fincong_{A_2}}$ ensures that $h^X$ is well-defined, and ${\infcong_{A_1}} \subseteq {\infcong_{A_2}}$ ensures that $h^Y$ is well-defined. Totality of $h^X$ and $h^Y$ follows from reachability of $A_1$. Preservation of the initial state and transitions follows from the definition of $h$. Preservation of final states follows from $\Lasso(A_1) = \Lasso(A_2)$. Hence $h$ is a lasso automaton morphism.
        \item ($\Rightarrow$) Suppose $g: (W_1,f_1,P_1) \to (W_2,f_2,P_2)$ is an extended lasso semigroup morphism. Then:
        \begin{multline*}
            u_1 \finsim_{W_1} u_2 \iff \fint f_1(u_1) = \fint f_1(u_2) \implies  g \fint f_1(u_1) = g \fint f_1(u_2) \iff \\
            \fint f_2(u_1) = \fint f_2(u_2) \iff u_1 \finsim_{W_2} u_2.
        \end{multline*}
        Analogously, $(v_1, a_1u_1) \infsim_{W_1} (v_2, a_2u_2)$ implies $(v_1, a_1u_1) \infsim_{W_2} (v_2, a_2u_2)$. Hence $\algsim_{W_1}$ refines $\algsim_{W_2}$. Moreover:
        \begin{multline*}
            \Lasso(W_1,f_1,P_1) = (\inft f_1)\inv(P_1) = (\inft f_1)\inv(g\inv(P_2)) = (\inft f_2)\inv(P_2) = \\
            = \Lasso(W_2,f_2,P_2).
        \end{multline*}

        ($\Leftarrow$) Suppose $\algsim_{W_1}$ refines $\algsim_{W_2}$ and $\Lasso(W_1,f_1,P_1) = \Lasso(W_2,f_2,P_2)$. Define $g$ as follows:
        \begin{align*}
            \fint g(s) &\coloneqq \fint f_2(u), & &\text{for some $u$ such that $s = \fint f_1(u)$}, \\
            \inft g(\alpha) &\coloneqq \inft f_2(v,u), &  &\text{for some $(v,u)$ such that $\alpha = \inft f_2(v,u)$}.
        \end{align*}
        Observe that $\fint g$ is well-defined since $\finsim_{W_1} \subseteq \finsim_{W_2}$, and $\inft g$ is well-defined since $\infsim_{W_1} \subseteq \infsim_{W_2}$. Totality of $g$ follows from surjectivity of $f_1$. The property $f_2 = gf_1$ follows from the definition of $g$. The property $\alpha \in P_1 \iff \inft g(\alpha) \in P_2$ follows from $\Lasso(W_1,f_1,P_1) = \Lasso(W_2,f_2,P_2)$. \qed
    \end{enumerate}
\end{proof}

\begin{apxlemmarep}
\label{lem:AutAlgCongruences}
    Let $A$ be a lasso automaton and $(W,f,P)$ be an extended lasso semigroup. Then ${\algsim_{\Alg(A)}} = {\autcong_A^\rev}$ and ${\autcong_{\Aut(W,f,P)}} = {\algsim_W^\rev}$.
\end{apxlemmarep}
\begin{proof}
    Let $A = (X, Y, q, \rho, \sigma, \xi, F)$ and $\Alg(A) = (W_A, f_A, P_A)$. Using \Cref{eq:fFin,eq:fInf}:
    \begin{align*}
        &u_1 \finsim_{\Alg(A)} u_2 \iff \fint f_A(u_1) = \fint f_A(u_2) \iff  \chi_A(u_1^\rev) = \chi_A(u_2^\rev) \\
        & \hspace{58px} \iff u_1^\rev \fincong_A u_2^\rev, \\
        &(v_1,a_1u_1) \infsim_{\Alg(A)} (v_2,a_2v_2) \iff \inft f_A(v_1,a_1u_1) = \inft f_A(v_2,a_2u_2) \iff \\
        &\hspace{80px}\xi(\sigma(\rho(q, u_1^\rev),a_1),v_1^\rev) = \xi(\sigma(\rho(q, u_2^\rev),a_2),v_2^\rev) \iff \\
        &\hspace{180px}(v_1,a_1u_1)^\rev \infcong_A (v_2,a_2u_2)^\rev.
    \end{align*}
    Hence ${\algsim_{\Alg(A)}} = {\autcong_A^\rev}$.

    Let $(W,f,P)$ be an extended lasso semigroup and $\Aut(W,f,P) = (X_W,Y_W,$ $q_W,\rho_W,\sigma_W,\xi_W,F_W)$. Using \Cref{def:algAutomaton} and equation \eqref{eq:algAutomatonReadInReverse} on page~\pageref{eq:algAutomatonReadInReverse}:
    \begin{align*}
        &u_1 \fincong_{\Aut(W,f,P)} u_2 \iff \chi_{\Aut(W,f,P)}(u_1) = \chi_{\Aut(W,f,P)}(u_2) \\
        &\hspace{120px}\iff \fint f(u_1^\rev) = \fint f(u_2^\rev) \iff u_1^\rev \finsim_W u_2^\rev, \\
        &(v_1,a_1u_1) \infcong_{\Aut(W,f,P)} (v_2,a_2u_2) \\
        &\hspace{10px}\iff \xi_W(\sigma_W(\rho_W(q_W, v_1),a_1),u_1) = \xi_W(\sigma_W(\rho_W(q_W, v_2),a_2),u_2) \\
        &\hspace{10px}\iff \inft f((v_1,a_1u_1)^\rev) = \inft f((v_2,a_2u_2)^\rev) \iff (v_1,a_1u_1)^\rev \infsim_W (v_2,a_2u_2)^\rev.
    \end{align*}
    Hence ${\autcong_{\Aut(W,f,P)}} = {\algsim_W^\rev}$.
\end{proof}

\begin{proposition}
\label{prop:AutAlgFunctorial}
    The mappings $\Alg$ and $\Aut$ can be extended uniquely to functors $\Alg: \RLAut \to \ELAlg$ and $\Aut: \ELAlg \to \RLAut$.
\end{proposition}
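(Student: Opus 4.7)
The plan is to exploit the posetal structure of both $\RLAut$ and $\ELAlg$ (\Cref{lem:posetalCategories}) so that functoriality and uniqueness of the extensions reduce to a purely ``object-level'' check: whenever there is a morphism $h\colon A_1 \to A_2$ in $\RLAut$, I need to produce a morphism $\Alg(A_1) \to \Alg(A_2)$ in $\ELAlg$ (and symmetrically for $\Aut$). Since hom-sets contain at most one arrow, the resulting assignment automatically preserves identities and composition, and is the unique such extension.

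For $\Alg$, I would proceed as follows. First, \Cref{lem:AutSendsToReachable} is not needed here, but I do need the object assignment $A \mapsto \Alg(A)$ to land in $\ELAlg$, which was already established (its underlying homomorphism $f_A$ is surjective onto $W_A$ by construction). Given $h\colon A_1 \to A_2$ in $\RLAut$, \Cref{lem:morphismIfCongruenceRefinement}(1) gives that $\autcong_{A_1}$ refines $\autcong_{A_2}$ and $\Lasso(A_1)=\Lasso(A_2)$. The refinement is clearly preserved under the reversal operation of \Cref{def:automatonLanguageCongruence}, so $\autcong_{A_1}^\rev$ refines $\autcong_{A_2}^\rev$; by \Cref{lem:AutAlgCongruences} this is exactly $\algsim_{\Alg(A_1)}$ refining $\algsim_{\Alg(A_2)}$. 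Reversing both sides of the language equality and invoking \Cref{prop:AlgReversesLanguage} yields $\Lasso(\Alg(A_1))=\Lasso(\Alg(A_2))$. By \Cref{lem:morphismIfCongruenceRefinement}(2), a morphism $\Alg(A_1) \to \Alg(A_2)$ exists.

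For $\Aut$, \Cref{lem:AutSendsToReachable} ensures the object assignment lands in $\RLAut$. Given $g\colon (W_1,f_1,P_1) \to (W_2,f_2,P_2)$ in $\ELAlg$, \Cref{lem:morphismIfCongruenceRefinement}(2) yields refinement of $\algsim_{W_1}$ by $\algsim_{W_2}$ and equality of recognised languages. Reversing gives refinement of the corresponding reversed congruences, which by \Cref{lem:AutAlgCongruences} is refinement of $\autcong_{\Aut(W_1,f_1,P_1)}$ by $\autcong_{\Aut(W_2,f_2,P_2)}$. \Cref{prop:AutReversesLanguage} then supplies $\Lasso(\Aut(W_1,f_1,P_1))=\Lasso(\Aut(W_2,f_2,P_2))$, and \Cref{lem:morphismIfCongruenceRefinement}(1) provides the required lasso automaton morphism.

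Finally, uniqueness of the extensions follows immediately from \Cref{lem:posetalCategories}: any two functorial extensions must agree on each morphism because the relevant hom-set is a singleton. The only ``obstacle'' is bookkeeping — verifying that the reversal operation on congruences behaves correctly with respect to refinement — but this is trivial from \Cref{def:automatonLanguageCongruence,def:lassoSemLanguageCongruence}, since reversal is a bijection on both $\newords$ and $\lassos$. There are no functoriality axioms to check beyond existence, as these hold automatically in posetal categories.
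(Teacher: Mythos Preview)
Your proposal is correct and follows essentially the same approach as the paper: reduce functoriality to existence of a morphism via \Cref{lem:posetalCategories}, then use \Cref{lem:morphismIfCongruenceRefinement} together with \Cref{lem:AutAlgCongruences} and the language-reversal propositions to transport refinement and language equality from one side to the other. The paper handles $\Aut$ with a one-line ``analogously'', which you spell out, and your explicit mention of \Cref{lem:AutSendsToReachable} for the codomain of $\Aut$ is a small clarification the paper leaves implicit.
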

\begin{proof}
    First, we prove functoriality of $\Alg$. Let $h = (h^X, h^Y): A_1 \to A_2$ be a lasso automaton morphism. Because of \Cref{lem:posetalCategories}, it suffices to show that there exists a morphism $g: \Alg(A_1) \to \Alg(A_2)$.
    By \Cref{lem:morphismIfCongruenceRefinement}, $\autcong_{A_1}$ refines $\autcong_{A_2}$ and $\Lasso(A_1) = \Lasso(A_2)$. By \Cref{lem:AutAlgCongruences}, $\algsim_{\Alg(A_1)}^\rev$ refines $\algsim_{\Alg(A_2)}^\rev$, so $\algsim_{\Alg(A_1)}$ refines $\algsim_{\Alg(A_2)}$. By \Cref{prop:AlgReversesLanguage}, $\Lasso(\Alg(A_1)) = \Lasso(A_1)^\rev = \Lasso(A_2)^\rev = \Lasso(\Alg(A_2))$. By \Cref{lem:morphismIfCongruenceRefinement} again, there exists a morphism $g: \Alg(A_1) \to \Alg(A_2)$.
    Functoriality of $\Aut$ follows analogously.
\end{proof}

\subsection{Lasso Adjunction}
Below we prove the dual adjunction between lasso automata and extended lasso semigroups. It is obtained as the composition of three simpler adjunctions (cf.~Diagram \labelcref{eq:lassoAutomatonAdjunctionDiagram}). We start with the adjunction between reachable lasso automata and extended lasso semigroups $\Aut \dashv \Alg$. It is the key technical result of this paper. In the proof, we work with the definition of adjunctions in terms of hom-sets, cf. \cite[Section~9.2]{Awodey2006CategoryTheory}.

\begin{proposition}
\label{prop:AutAlgAdjunction}
    There exists an adjunction $\Aut \dashv \Alg: \ELAlg \to \RLAut$.
\end{proposition}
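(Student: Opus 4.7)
The plan is to exploit the posetal nature of the two categories established in \Cref{lem:posetalCategories}. Between posetal categories, an adjunction $\Aut \dashv \Alg$ reduces to a Galois connection: for every extended lasso semigroup $(W,f,P)$ and every reachable lasso automaton $A$, we must show that there is a morphism $\Aut(W,f,P) \to A$ in $\RLAut$ if and only if there is a morphism $(W,f,P) \to \Alg(A)$ in $\ELAlg$. No naturality verification is required beyond this, because the hom-sets have at most one element.

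First, I would translate both hom-set conditions into statements about congruences and languages via \Cref{lem:morphismIfCongruenceRefinement}. The existence of a morphism $\Aut(W,f,P) \to A$ is equivalent to the conjunction of $\autcong_{\Aut(W,f,P)}$ refining $\autcong_A$ and $\Lasso(\Aut(W,f,P)) = \Lasso(A)$; and the existence of a morphism $(W,f,P) \to \Alg(A)$ is equivalent to $\algsim_W$ refining $\algsim_{\Alg(A)}$ together with $\Lasso(W,f,P) = \Lasso(\Alg(A))$.

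Next, I would apply \Cref{lem:AutAlgCongruences} to rewrite both pairs of congruences using the reverse operation: ${\autcong_{\Aut(W,f,P)}} = {\algsim_W^\rev}$ and ${\algsim_{\Alg(A)}} = {\autcong_A^\rev}$. A direct unfolding of \Cref{def:automatonLanguageCongruence,def:lassoSemLanguageCongruence} shows that refinement is preserved under the reverse operation, i.e.\ $\algsim_W^\rev$ refines $\autcong_A$ iff $\algsim_W$ refines $\autcong_A^\rev = \algsim_{\Alg(A)}$. For the language condition, I invoke \Cref{prop:AutReversesLanguage,prop:AlgReversesLanguage}, which give $\Lasso(\Aut(W,f,P)) = \Lasso(W,f,P)^\rev$ and $\Lasso(\Alg(A)) = \Lasso(A)^\rev$; hence $\Lasso(\Aut(W,f,P)) = \Lasso(A)$ iff $\Lasso(W,f,P) = \Lasso(A)^\rev = \Lasso(\Alg(A))$, using that $(-)^\rev$ is an involution on lasso languages.

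Putting the two equivalences together shows the two hom-set conditions are equivalent, establishing the Galois connection and hence the adjunction $\Aut \dashv \Alg$. The only genuine content is the bookkeeping of the reverse operation; the main conceptual point, which is that reversal exchanges the two sides coherently, is essentially already packaged into \Cref{lem:AutAlgCongruences,prop:AutReversesLanguage,prop:AlgReversesLanguage}. I do not anticipate any real obstacle, provided one has verified that the refinement relation commutes with $(-)^\rev$, which is immediate from the definitions.
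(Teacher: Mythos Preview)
Your proposal is correct and follows essentially the same approach as the paper's own proof: both reduce the adjunction to a Galois connection via \Cref{lem:posetalCategories}, translate each side through \Cref{lem:morphismIfCongruenceRefinement}, and then match the two using \Cref{lem:AutAlgCongruences} for the congruences and \Cref{prop:AutReversesLanguage,prop:AlgReversesLanguage} for the languages. Your explicit remark that refinement commutes with $(-)^\rev$ is the only step the paper leaves implicit, and as you note it is immediate from the definitions.
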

\begin{proof}
    Let $A = (X, Y, q, \rho, \sigma, \xi, F)$ be an arbitrary reachable lasso automaton, $(W, f, P)$ an arbitrary extended lasso semigroup. Since $\RLAut$ and $\ELAlg$ are posetal categories, it suffices to show that there exists a morphism $g: (W,f,P) \to \Alg(A)$ if and only if there exists a morphism $h: \Aut(W, f, P) \to A$. Suppose there exists $g: (W,f,P) \to \Alg(A)$. By \Cref{lem:morphismIfCongruenceRefinement}, $\algsim_W$ refines $\algsim_{\Alg(A)}$ and $\Lasso(\Alg(A)) = \Lasso(W,f,P)$. By \Cref{lem:AutAlgCongruences}, $\algsim_W$ refines $\autcong_A^\rev$. By \Cref{lem:AutAlgCongruences} again, $\autcong_{\Aut(W,f,P)}^\rev$ refines $\autcong_A^\rev$, so $\autcong_{\Aut(W,f,P)}$ refines $\autcong_A$. By \Cref{prop:AutReversesLanguage} and \Cref{prop:AlgReversesLanguage}, $\Lasso(A) = \Lasso(\Alg(A))^\rev = \Lasso(W,f,P)^\rev = \Lasso(\Aut(W,$ $f,P))$. By \Cref{lem:morphismIfCongruenceRefinement} again, there exists $h: \Aut(W,f,P) \to A$. The other direction is analogous.
    \qed
\end{proof}

Although $\Aut \dashv \Alg$ reveals a relationship between lasso automata and extended lasso semigroups, it leaves more to be desired. Concretely, we look for an adjunction: (1) that is also defined for non-reachable automata, and (2) whose constituent functors preserve the accepted language. Language-preservation enables specialising the adjunction to $\Omega$-automata and Wilke algebras in \Cref{sec:adjunctionOmegaAutomata}.

In order to handle the first requirement, we give an adjunction between $\RLAut$ and $\LAut$. It is analogous to a similar adjunction between reachable DFAs and all DFAs \cite[Section~9.4]{BonchiEtAl2014AlgebraCoalgebraDualityInBrzozowski}. In one direction, we have an inclusion functor $\Incl: \RLAut \to \LAut$. For the the other direction, there exists a functor $\Reach: \LAut \to \RLAut$ mapping an automaton $A$ to its reachable part $\Reach(A)$. Moreover, $\Reach$ maps an automaton morphism to its restriction to reachable states.

\begin{apxpropositionrep}
    There exists an adjunction $\Incl \dashv \Reach: \RLAut \to \LAut$.
\end{apxpropositionrep}
\begin{proofsketch}
    Every morphism in $\Hom(A, \Reach(B))$ can be mapped bijectively to a morphism in $\Hom(\Incl(A), B)$ by expanding its codomain.
\end{proofsketch}
\begin{proof}
    Let $A \in \RLAut, B \in \LAut$. We show the existence of a natural isomorphism:
    \begin{equation*}
        \phi_{A,B}: \Hom(A, \Reach(B)) \to \Hom(\Incl(A), B).
    \end{equation*}
    Let $f: A \to \Reach(B)$ be a morphism in $\RLAut$. Since $\Incl(A) = A$ and $\Reach(B) \subseteq B$, $f$ can also be seen as a morphism between $\Incl(A)$ and $B$ by expanding its codomain. Define $\phi_{A,B}(f)$ to be the result of this codomain change on $f$. Now $\phi_{A,B}$ is clearly injective. Moreover, by reachability of $\Incl(A)$ and the fact that morphisms send reachable states to reachable states, the range of every morphism between $\Incl(A)$ and $B$ is contained in $\Reach(B)$. Hence $\phi_{A,B}$ is also surjective. Naturality of the isomorphism is simple to verify. \qed
\end{proof}

In order to handle the second requirement, we recall from \cite[Section~8.1]{Cruchten2022TopicsInOmegaAutomata} that $\Rev$ from \Cref{def:reverseLassoAutomaton} can be extended to a functor which is its own dual adjoint. That is, \cite[Def.~8.23]{Cruchten2022TopicsInOmegaAutomata} extends $\Rev$ to a functor by defining it on morphisms as $\Rev(h^X,h^Y) = ((h^X)\inv, (h^Y)\inv)$. Then \cite[Cor.~8.24]{Cruchten2022TopicsInOmegaAutomata} states that there is an adjunction $\Rev \dashv \op\Rev: \LAut \to \op\LAut$.

Now we are ready to collect all adjunctions into the main result of this section.

\begin{theorem}
\label{thm:lassoAutomataAdjunctionChain}
    The functors $\Rev \circ \Incl \circ \Aut$ and $\Aut \circ \Reach \circ \op\Rev$ are language-preserving adjoints, with $\Rev \circ \Incl \circ \Aut \dashv \Alg \circ \Reach \circ \op\Rev: \ELAlg \to \op\LAut$.
    \begin{equation}
        \begin{tikzcd}
            \ELAlg & \bot & \RLAut & \bot & \LAut & \bot & \op\LAut
            \arrow["\Aut", curve={height=-18pt}, from=1-1, to=1-3]
            \arrow["\Alg", curve={height=-18pt}, from=1-3, to=1-1]
            \arrow["\Incl", curve={height=-18pt}, from=1-3, to=1-5]
            \arrow["\Reach", curve={height=-18pt}, from=1-5, to=1-3]
            \arrow["\Rev", curve={height=-18pt}, from=1-5, to=1-7]
            \arrow["\op\Rev", curve={height=-18pt}, from=1-7, to=1-5]
        \end{tikzcd}
        \label{eq:lassoAutomatonAdjunctionDiagram}
    \end{equation}
\end{theorem}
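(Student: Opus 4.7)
The plan is to obtain this adjunction by composing the three adjunctions already assembled in Diagram \labelcref{eq:lassoAutomatonAdjunctionDiagram}, and then to verify language-preservation by tracking language reversals through the composite.

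First, I would invoke the standard categorical fact that adjunctions compose: given $F_1 \dashv G_1 : \mathcal{B} \to \mathcal{A}$ and $F_2 \dashv G_2 : \mathcal{C} \to \mathcal{B}$, one obtains $F_2 F_1 \dashv G_1 G_2 : \mathcal{C} \to \mathcal{A}$, with the witnessing hom-set bijection given by composing the two component bijections. Applying this twice to the three constituent adjunctions $\Aut \dashv \Alg$ (Proposition \ref{prop:AutAlgAdjunction}), $\Incl \dashv \Reach$ (the preceding proposition), and $\Rev \dashv \op\Rev$ (\cite[Cor.~8.24]{Cruchten2022TopicsInOmegaAutomata}) yields the composite adjunction going from $\ELAlg$ to $\op\LAut$, with left adjoint $\Rev \circ \Incl \circ \Aut$ and right adjoint $\Alg \circ \Reach \circ \op\Rev$ (the right adjoints compose in reverse order, matching the statement). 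The types line up directly from the diagram.

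Second, to establish language-preservation I would trace each component. On the left-adjoint side: $\Incl$ preserves the language trivially, $\Aut$ reverses it by Proposition \ref{prop:AutReversesLanguage}, and $\Rev$ reverses it by construction. Hence for every extended lasso semigroup $(W,f,P)$,
\[
\Lasso\bigl(\Rev(\Incl(\Aut(W,f,P)))\bigr) = \Lasso\bigl(\Aut(W,f,P)\bigr)^\rev = \bigl(\Lasso(W,f,P)^\rev\bigr)^\rev = \Lasso(W,f,P).
\]
Symmetrically on the right-adjoint side: $\op\Rev$ reverses, $\Reach$ preserves (since the reachable sub-automaton accepts the same lasso language as the full automaton), and $\Alg$ reverses by Proposition \ref{prop:AlgReversesLanguage}. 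The two reversals cancel, giving language-preservation.

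There is no substantial technical obstacle here: the hard content is already packaged into the three component adjunctions and the two language-reversal propositions, and everything else is bookkeeping. If anything, the only point demanding care is matching the directions in the diagram so the composition is well-defined, and observing that because $\RLAut$ and $\ELAlg$ are posetal (\Cref{lem:posetalCategories}), naturality of the resulting hom-set bijections is automatic. Thus the proof can be kept very short, essentially a one-sentence appeal to composition of adjunctions followed by the explicit calculation above.
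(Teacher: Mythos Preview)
Your proposal is correct and matches the paper's approach exactly: the paper presents this theorem as the assembly of the three component adjunctions established just before it (indeed, no separate proof block is given in the paper; the sentence ``Now we are ready to collect all adjunctions into the main result of this section'' is all the paper offers). Your explicit tracking of the two language reversals on each side, and your observation that naturality is automatic by posetality, simply spell out details the paper leaves implicit; you also silently corrected the typo in the statement (the right adjoint is $\Alg \circ \Reach \circ \op\Rev$, not $\Aut \circ \Reach \circ \op\Rev$).
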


We make some observations about the adjunction.
The functor $\Rev$ maps a reachable lasso automaton to an observable lasso automaton \cite[Chap.8]{Cruchten2022TopicsInOmegaAutomata}. Informally, a lasso automaton is observable if distinct states accept distinct lasso languages. A lasso automaton is minimal if it is both reachable and observable. 
It follows that for all $(W,f,P) \in \ELAlg$, the automaton $(\Rev \circ \Incl \circ \Aut)(W,f,P)$ is observable.
Hence by taking its reachable part, we obtain a minimal automaton accepting $\Lasso(W,f,P)$.

Going in the other direction, if we start with a reachable lasso automaton $A$ accepting $L$, 
then $(\Reach \circ \op\Rev)(A)$ is a minimal automaton accepting $L^\rev$, and 
$(\Alg \circ \Reach \circ \op\Rev)(A)$ is the maximal quotient of $(\newords, \lassos)$ that recognises $L$.
This comes about as follows, cf.~\cite[Sec.~9.2]{BonchiEtAl2014AlgebraCoalgebraDualityInBrzozowski}.
The categories $\ELAlg$ and $\RLAut$ do not have initial or final objects, since
morphisms preserve the language, but if we fix a lasso language $L$ and denote by $\ELAlg(L)$ and $\RLAut(L)$ the 
full subcategories of structures that recognise, resp.~accept, $L$ then we do obtain initial and final objects
in $\ELAlg(L)$ and $\RLAut(L)$. Since $\Aut$ and $\Alg$ reverse the language, they restrict to an adjunction between
$\ELAlg(L)$ and $\RLAut({L^\rev})$. 
The final object in $\RLAut({L^\rev})$ is the minimal lasso automaton for $L^\rev$, and the final object in $\ELAlg(L)$ is 
the maximal quotient of $(\newords, \lassos)$ that recognises $L$. Since $\Alg$ is a right adjoint, it preserves final objects, hence $\Alg$ maps the minimal lasso automaton for $L^\rev$ to the maximal quotient of $(\newords, \lassos)$ that recognises $L$.

This, in particular, shows that $\Alg \circ \Reach \circ \op\Rev$ differs from Cruchten's construction~\cite[Ch.~5]{Cruchten2022TopicsInOmegaAutomata}, because the latter does not map all reachable $\Omega$-automata accepting $L$ to the maximal Wilke algebra quotient for $L$. For instance, one can observe that Cruchten's construction maps the initial $\Omega$-automaton for $L = \lassos$, with states $X = \words, Y = \lassos$, to the minimal Wilke algebra quotient $(\newords, \upwords)$.


\section{Restricting the Adjunction to \texorpdfstring{$\Omega$}{Omega}-Automata and Wilke Algebras}
\label{sec:adjunctionOmegaAutomata}
In this section, we show that the adjunction from \Cref{thm:lassoAutomataAdjunctionChain} restricts to $\Omega$-automata and Wilke algebras. First, we note that we can define a notion of extended Wilke algebra by adding a recognising subset to a Wilke algebra homomorphisms $f: (\newords, \upwords) \to W$. 
The main observation is that $\Omega$-automata are a full subcategory of $\LAut$, and extended Wilke algebras can be identified with a full subcategory of $\ELAlg$. In general, restricting an adjunction to full subcategories yields another adjunction, as long as the restricted functors are well-defined on objects. This is because hom-sets in a full subcategory are inherited from the ambient category. Therefore our task is to show that restricting the functors from \Cref{thm:lassoAutomataAdjunctionChain} to $\Omega$-automata and Wilke algebras is well-defined.

We begin with specialising extended lasso semigroups to Wilke algebras.

\begin{definition}[Extended Wilke Algebra]
    An \emph{extended Wilke algebra} is an extended lasso semigroup $(W, f, P)$ such that $W$ is a Wilke algebra, i.e., $W$ satisfies the circularity and coherence axioms. We write $\EWAlg$ for the full subcategory of $\ELAlg$ of all extended Wilke algebras.
\end{definition}

Note that in the above definition $f: (\newords, \lassos) \twoheadrightarrow W$ has the free lasso semigroup as its domain, instead of the free Wilke algebra. But, given a Wilke algebra $W$, there exists a bijective correspondence between maps of type $(\newords, \upwords) \twoheadrightarrow W$ and maps of type $(\newords, \lassos) \twoheadrightarrow W$. This correspondence is given by precomposition with the map $\phi: (\newords, \lassos) \twoheadrightarrow (\newords, \upwords)$ defined by  $\fint \phi(s) = s$ and $\inft \phi(u, v) = uv^\omega$. We prefer the type $(\newords, \upwords) \twoheadrightarrow W$, as it allows us to view $\EWAlg$ as a subcategory of $\ELAlg$.

Next, we turn to the automaton categories. As we remarked, $\Omega$-automata form a full subcategory of $\LAut$, which we write as $\OAut$. However, the functor $\Rev: \LAut \to \op\LAut$ does not restrict to $\Rev: \OAut \to \op\OAut$. In order to see why the reverse of an $\Omega$-automaton is not an $\Omega$-automaton, recall that for any $\Omega$-automaton $A$, the language $\Lasso(A)$ is saturated. But we cannot expect that $\Lasso(A^\rev) = \Lasso(A)^\rev$ is also saturated. Hence we introduce a new type of lasso automata which turn out to be exactly the reverse of some $\Omega$-automaton.

\begin{definition}[$\Omega^\rev$-automata]
    A \emph{$\Omega^\rev$}-automaton (in words, reverse-$\Omega$-auto\-maton) is a lasso automaton $A = (X, Y, q, \rho, \sigma, \xi, F)$ satisfying, for all $va, vba $ $\in \newords$ and $k > 0$:
    \begin{equation*}
        \sigma(\rho(q, v),a) = \sigma(\rho(q, (va)^kv),a) \quad \text{and} \quad \sigma(\rho(q, vb),a) = \xi(\sigma(\rho(q, av),b), a).
    \end{equation*}
    We call these identities \emph{reverse-circularity} and \emph{reverse-coherence}, respectively.
\end{definition}

\begin{apxpropositionrep}
\label{prop:RevCircularityCoherence}
    Let $A$ be a lasso automaton. If $A$ is circular, then $\Rev(A)$ is reverse-circular, and if $A$ is reverse-circular, then $\Rev(A)$ is circular. Likewise for coherence and reverse coherence.
\end{apxpropositionrep}
\begin{proofsketch}
     Let $A = (X, Y, q, \rho, \sigma, \xi, F)$ and $\Rev(A) = (X^\rev, Y^\rev,q^\rev, \rho^\rev, \sigma^\rev,$ $\xi^\rev, F^\rev)$. If $A$ is circular, $va \in \newords$ and $k > 0$:
    \begin{multline*}
        \sigma^\rev(\rho^\rev(q^\rev, v),a) = \{ x \in X \mid \xi(\sigma(x, a), v^\rev) \in F \} = \\
        = \{ x \in X \mid \xi(\sigma(x, a), v^\rev(av^\rev)^k) \in F \} = \sigma^\rev(\rho^\rev(F, (va)^kv), a),
    \end{multline*}
    where the second equality uses circularity, and the first and third equalities use the identity $\xi^\rev(\sigma^\rev(\rho^\rev (q^\rev, v),a), w) = \{ x \in X \mid \xi(\sigma(\rho(x, w^\rev),a), v^\rev) \in F \}$. Hence $\Rev(A)$ is reverse-circular. The other parts of the proposition follow by similar reasoning. \qed
\end{proofsketch}
\begin{proof}
    Let $A = (X, Y, q, \rho, \sigma, \xi, F)$ and $\Rev(A) = (X^\rev, Y^\rev,q^\rev, \rho^\rev, \sigma^\rev, \xi^\rev, F^\rev)$. From the proof of \cite[Prop.~8.22]{Cruchten2022TopicsInOmegaAutomata}, we deduce two useful identities:
    \begin{align}
        &\xi^\rev(\sigma^\rev(\rho^\rev (q^\rev, v),a), w) = \{ x \in X \mid \xi(\sigma(\rho(x, w^\rev),a), v^\rev) \in F \} \label{eq:CirCohReversalLem1} \\
        &\xi^\rev(\sigma^\rev(\rho^\rev(Z, v), a),w) \in F^\rev \iff \xi(\sigma(\rho(q, w^\rev), a), v^\rev) \in Z, \label{eq:CirCohReversalLem2}
    \end{align}
    for all $Z \in X^\rev = \mathcal P(Y)$, $v, w \in \words$, $a \in \Sigma$.
    
    If $A$ is circular, $va \in \newords$ and $k > 0$:
    \begin{multline*}
        \sigma^\rev(\rho^\rev(q^\rev, v),a) = \{ x \in X \mid \xi(\sigma(x, a), v^\rev) \in F \} = \\
        = \{ x \in X \mid \xi(\sigma(x, a), v^\rev(av^\rev)^k) \in F \} = \sigma^\rev(\rho^\rev(F, (va)^kv), a),
    \end{multline*}
    where the first and third equalities use \Cref{eq:CirCohReversalLem1} and the second equality uses circularity. Hence $\Rev(A)$ is reverse-circular. If $A$ is coherent and $vba \in \newords$:
    \begin{multline*}
        \sigma^\rev(\rho^\rev(q^\rev, vb),a) = \{ x \in X \mid \xi(\sigma(x, a),bv^\rev) \in F \} = \\
        = \{ x \in X \mid \xi(\sigma(\rho(x, a), b),v^\rev a) \in F \} = \xi^\rev(\sigma^\rev(\rho^\rev(q^\rev, av),b), a),
    \end{multline*}
    where the first and third equalities use \Cref{eq:CirCohReversalLem1} and the second equality uses coherence. Hence $\Rev(A)$ is reverse-coherent. If $A$ is reverse-circular, $av \in \newords$ and $k > 0$:
    \begin{multline*}
        \xi^\rev(\sigma^\rev(Z, a),v) \in F^\rev \iff \sigma(\rho(q, v^\rev),a) \in Z \iff \\
        \sigma(\rho(q, (v^\rev a)^k v^\rev),a) \in Z
        \iff \xi^\rev(\sigma^\rev(Z, a), v(av)^k) \in F^\rev,
    \end{multline*}
    where the first and third equivalences use \Cref{eq:CirCohReversalLem2} and the second equivalence uses reverse-circularity. Hence $\Rev(A)$ is circular. Finally, if $A$ is reverse-coherent and $abv \in \newords$:
    \begin{multline*}
        \xi^\rev(\sigma^\rev(Z, a),bv) \in F^\rev 
        \iff \sigma(\rho(q, v^\rev b),a) \in Z \iff \\
        \xi(\sigma(\rho(q, av^\rev),b), a) \in Z
        \iff \xi^\rev(\sigma^\rev(\rho^\rev(Z, a),b), va) \in F^\rev,
    \end{multline*}
    where the first and third equivalences use \Cref{eq:CirCohReversalLem2} and the second equivalence uses reverse-coherence. Hence $\Rev(A)$ is coherent.
    \qed
\end{proof}

\begin{apxpropositionrep}
\label{prop:AutAlgCircularityCoherence}
    Let $A \in \LAut$ and $(W,f,P) \in \ELAlg$. If $A$ is reverse-circular, then $\Alg(A)$ satisfies the circularity axiom, and if $(W, f, P)$ satisfies the circularity axiom, then $\Aut(W,f,P)$ is reverse-circular. Likewise for reverse-coherence and the coherence axiom.
\end{apxpropositionrep}
\begin{proofsketch}
    We show that applying $\Alg$ to a reverse-coherent automaton yields a coherent algebra. The other parts of the proposition follows by similar reasoning. Let $A = (X, Y, q, \rho, \sigma, \xi, F)$ and $\Alg(A) = (W_A, f_A, P_A)$. Suppose that $A$ is reverse-coherent and let $(\alpha_i, \beta_i,\gamma_i) \in \fint W_A$, for $i \in \{1,2\}$. We have $(\alpha_1,\beta_1,\gamma_1) = \fint f(a_1\dotsc a_n)$ and $(\alpha_2,\beta_2,\gamma_2) = \fint f(bv)$, for some $a_1, \dotsc, a_n, b \in \Sigma$, $v \in \words$. Hence:
    \begin{align*}
        &(\alpha_1, \beta_1, \gamma_1) \times \big((\alpha_2, \beta_2, \gamma_2) \cdot (\alpha_1, \beta_1, \gamma_1)\big)^\omega = (\alpha_1, \beta_1, \gamma_1) \times (\alpha_1 \alpha_2, \beta_1 \alpha_2, \gamma_1 \gamma_2)^\omega \\
        &\quad = (\alpha_1, \beta_1, \gamma_1) \times \beta_1 \alpha_2(q) = \gamma_1 \beta_2 \alpha_1 (q) = \xi(\sigma(\rho(q, a_n \dotsc a_1 v^r),b), a_n \dotsc a_1) \\
        &\quad= \xi(\sigma(\rho(q, a_{n-1}\dotsc a_1 v^\rev b), a_n), a_{n-1} \dotsc a_1) = \dotsc = \sigma(\rho(q, v^\rev b a_n \dotsc a_2), a_1) \\
        &\quad= \beta_1(\alpha_2(q)) = (\alpha_1 \alpha_2, \beta_1 \alpha_2, \gamma_1 \gamma_2)^\omega = \big((\alpha_1, \beta_1, \gamma_1) \cdot (\alpha_2, \beta_2, \gamma_2)\big)^\omega,
    \end{align*}
    where we use reverse-coherence $n$-many times in the third line. \qed
\end{proofsketch}
\begin{proof}
    Let $A = (X, Y, q, \rho, \sigma, \xi, F)$ and $\Alg(A) = (W_A, f_A, P_A)$. Suppose $A$ is reverse-circular and let $(\alpha, \beta, \gamma) \in \fint W_A$, $k > 0$. We have $(\alpha, \beta, \gamma) = \fint f_A(au)$ for some $au \in \newords$. Therefore:
    \begin{multline*}
        ((\alpha, \beta, \gamma)^k)^\omega = (\alpha^k, \beta \alpha^{k-1}, \gamma^k)^\omega = \beta \alpha^{k-1}(q) = \sigma(\rho(q, (u^\rev a)^{k-1} u^\rev), a) = \\
        = \sigma(\rho(q, u^\rev), a) = \beta(q) = (\alpha, \beta,\gamma)^\omega,
    \end{multline*}
    where we use reverse-circularity in the fourth equality. Therefore $(W_A,f_A,P_A)$ satisfies the circularity axiom. Now suppose that $A$ is reverse-coherent and let $(\alpha_i, \beta_i,\gamma_i) \in \fint W_A$, for $i \in \{1,2\}$. We have $(\alpha_1,\beta_1,\gamma_1) = \fint f(a_1\dotsc a_n)$ and $(\alpha_2,\beta_2,\gamma_2) = \fint f(bv)$, for some $a_1, \dotsc, a_n, b \in \Sigma$, $v \in \words$. Hence:
    \begin{align*}
        &(\alpha_1, \beta_1, \gamma_1) \times \big((\alpha_2, \beta_2, \gamma_2) \cdot (\alpha_1, \beta_1, \gamma_1)\big)^\omega = (\alpha_1, \beta_1, \gamma_1) \times (\alpha_1 \alpha_2, \beta_1 \alpha_2, \gamma_1 \gamma_2)^\omega \\
        &\quad = (\alpha_1, \beta_1, \gamma_1) \times \beta_1 \alpha_2(q) = \gamma_1 \beta_2 \alpha_1 (q) = \xi(\sigma(\rho(q, a_n \dotsc a_1 v^r),b), a_n \dotsc a_1) \\
        &\quad= \xi(\sigma(\rho(q, a_{n-1}\dotsc a_1 v^\rev b), a_n), a_{n-1} \dotsc a_1) = \dotsc = \sigma(\rho(q, v^\rev b a_n \dotsc a_2), a_1) \\
        &\quad= \beta_1(\alpha_2(q)) = (\alpha_1 \alpha_2, \beta_1 \alpha_2, \gamma_1 \gamma_2)^\omega = \big((\alpha_1, \beta_1, \gamma_1) \cdot (\alpha_2, \beta_2, \gamma_2)\big)^\omega,
    \end{align*}
    where we use reverse-coherence $n$-many times in the equalities on the third line. Therefore $(W_A, f_A, P_A)$ satisfies the coherence axiom.
    
    Let $\Aut(W, f, P) = (X_W, Y_W, q_W, \rho_W, \sigma_W, \xi_W, F_W)$. Suppose $(W, f, P)$ satisfies the circularity axiom and let $va \in \newords$, $k > 0$. Then:
    \begin{multline*}
        \sigma_W(\rho_W(q_W, v), a) = (\fint f(av^\rev))^\omega = (\fint f(av^\rev) \cdot \fint f((av^\rev)^k)^\omega = \\
        = (\fint f(a) \cdot \fint f(v^\rev (av^\rev)^k)^\omega = \sigma_W(\rho_W(q_W, (va)^kv),a),
    \end{multline*}
    where we use the circularity axiom in the second equality. Hence $\Aut(W,f,P)$ is reverse-circular. Now suppose $(W,f,P)$ satisfies the coherence axiom and let $vba \in \newords$. Then:
    \begin{multline*}
        \sigma_W(\rho_W(q_W, vb),a) = (\fint f(abv^\rev))^\omega = (\fint f(a) \cdot \fint f(bv^\rev))^\omega = \\
        = \fint f(a) \times (\fint f(bv^\rev) \cdot \fint f(a))^\omega = \xi_W(\sigma_W(\rho_W(q_W, av), b), a),
    \end{multline*}
    where we use the coherence axiom in the third equality. Hence $\Aut(W,f,P)$ is reverse-coherent.
    \qed
\end{proof}

Now we are ready to present the adjunction from \Cref{thm:lassoAutomataAdjunctionChain}, restricted to $\Omega$-automata and Wilke algebras.

\begin{definition}
    Write $\OAut$ for the full subcategory of $\LAut$ of all $\Omega$-automata. Write $\ROAut$ for the full subcategory of $\LAut$ of all $\Omega^\rev$-automata. Finally, write $\RROAut$ for the full subcategory of $\ROAut$ of all reachable $\Omega^\rev$-automata.
\end{definition}

\begin{theorem}
\label{thm:OmegaAutAdjunction}
    The adjunction from \Cref{thm:lassoAutomataAdjunctionChain} restricts to:
\begin{equation*}
    \begin{tikzcd}
        \EWAlg & \bot & \RROAut & \bot & \ROAut & \bot & \op\OAut
        \arrow["\Aut", curve={height=-18pt}, from=1-1, to=1-3]
        \arrow["\Alg", curve={height=-18pt}, from=1-3, to=1-1]
        \arrow["\Incl", curve={height=-18pt}, from=1-3, to=1-5]
        \arrow["\Reach", curve={height=-18pt}, from=1-5, to=1-3]
        \arrow["\Rev", curve={height=-18pt}, from=1-5, to=1-7]
        \arrow["\op\Rev", curve={height=-18pt}, from=1-7, to=1-5]
    \end{tikzcd}
\end{equation*}
\end{theorem}
\begin{proof}
    It follows from \Cref{prop:AutAlgCircularityCoherence} that the restrictions $\Aut: \EWAlg \to \RROAut$ and $\Alg: \RROAut \to \EWAlg$ are well-defined. It is straightforward to see that reverse-circularity and reverse-coherence are preserved by $\Reach$, so the restrictions $\Incl: \RROAut \to \ROAut$ and $\Reach: \ROAut \to \RROAut$ are well-defined. Finally, from \Cref{prop:RevCircularityCoherence}, we have that the restrictions $\Rev: \ROAut \to \op \OAut$ and $\op \Rev: \op \OAut \to \ROAut$ are well-defined. Therefore $\Rev \circ \Incl \circ \Aut \dashv \Alg \circ \Reach \circ \op \Rev: \EWAlg \to \op \OAut$.
\end{proof}

The observations made below \Cref{thm:lassoAutomataAdjunctionChain} apply also in the setting of \Cref{thm:OmegaAutAdjunction}, including the relationships between minimal automata and maximal quotients.
In \cite{CianciaVenema2019OmegaAutomataACoalgebraicPerspective}, a decision procedure was given for checking whether a lasso automaton is an $\Omega$-automaton. \Cref{thm:lassoAutomataAdjunctionChain} and \Cref{thm:OmegaAutAdjunction} provide an alternative algebraic procedure via the following proposition.

\begin{apxpropositionrep}\label{lem:deciding-circ-coh}
 A lasso automaton $A$ is circular and coherent iff the extended lasso semigroup $(\Alg \circ \Reach \circ \Rev)(A)$ is circular and coherent. 
 Checking whether a finite lasso semigroup  $(\fint W, \inft W)$ is circular and coherent can be done in time $O(n^2)$ where $n=|\fint W|$.
\end{apxpropositionrep}

\begin{proof}
($\Rightarrow$) follows from the well-definedness of $\Alg \circ \Reach \circ \Rev$ in \Cref{thm:OmegaAutAdjunction}.
($\Leftarrow$): By contraposition, suppose $A$ is not circular or not coherent. 
Then there exists a state $x$ in $A$ such that $\Lasso(A, x)$ is not saturated (cf. \cite[Fact 17]{CianciaVenema2019OmegaAutomataACoalgebraicPerspective}). 
Let $A'$ be the automaton obtained from $A$ by changing the initial state to $x$. Then $\Lasso(A') = \Lasso(\Alg \circ \Reach \circ \Rev(A'))$ is not saturated, and hence it cannot be both circular and coherent. But $\Alg \circ \Reach \circ \Rev(A')$ and $\Alg \circ \Reach \circ \Rev(A)$ differ only in the recognising set (recall \cite[Chap.8]{Cruchten2022TopicsInOmegaAutomata} that $\Rev$ turns the initial state into a set of final states), so $\Alg \circ \Reach \circ \Rev(A)$ is also not circular or not coherent.

Now, let $(\fint W, \inft W)$ be  a finite lasso semigroup, and let $n=|\fint W|$.
Coherence amounts to checking $n^2$ equations. 
For circularity we need to check the equation $(s^k)^\omega = s^\omega$ for all $s \in \fint W$ and all $k >0$. For fixed $s \in \fint W $, the sequence $s^1, s^2, s^3, \ldots$ has at most $n$ many distinct elements. Hence it suffices to check that these at most $n$ elements are all equal to $s^\omega$. Hence circularity requires checking for each $s \in \fint W$ at most $n$ equations plus at most $n$ lookups, giving a total of at most $n(n+n)$ checks.
Assuming lasso semigroup operations can be evaluated in constant time, we obtain an overall complexity of $O(n^2)$.

If $A$ is finite, then  $(\Alg \circ \Reach \circ \Rev)(A) = (\fint W, \inft W)$ is finite.
We provide a double-exponential upper bound on $|\fint W|$ in terms of $m=\max(|X|,|Y|)$ where $X, Y$ are the state sets of $A$.
To see this, let $(\Reach\circ\Rev)(A) = (X',Y', \ldots)$. The reverse-determinise construction gives an exponential number of states in the worst case, and the resulting automaton might be reachable. Hence $|X'|, |Y'| \leq 2^m$.
Now let $(\Alg\circ\Reach\circ\Rev)(A) = (\fint W, \inft W)$.
Using that $k^k \leq 2^{k^2}$ and $2^{(2^m)^2} = 2^{2^{2m}}$, we get that $n = |\fint W| \leq 2^{3 \cdot {2^{2m}}}$.
Hence checking circularity and coherence of $A$ via $\Alg \circ \Reach \circ \Rev(A)$ 
has an upper bound in terms of $m$ of 
$O((2^{3 \cdot {2^{2m}}})^2)=O(2^{6 \cdot {2^{2m}}})$. 
\end{proof}
The size of $(\Alg \circ \Reach \circ \Rev)(A)$ is in the worst case doubly-exponential in the number of states of $A$.
However, the exponential blow-up in the reverse-determinise construction is known to often not turn up in practice \cite{Champarnaud2002SplitAJ}, so it could be interesting to evaluate the algebraic decision procedure on some real-life examples.


\section{Conclusion}
\label{sec:conclusion}

In this paper, we introduced and studied lasso semigroups as generalisations of Wilke algebras. We proved that homomorphisms into finite lasso semigroups characterise regular lasso languages by giving language-preserving transformations between lasso automata and extended lasso semigroups. We extended these transformations to dually adjoint functors between the categories of lasso automata and of lasso semigroups extended with a recognising set, and showed that this adjunction restricts to a dual adjunction between $\Omega$-automata and extended Wilke algebras. 

Since lasso semigroups characterise regular lasso languages, we believe that they are also of interest in their own right. This is motivated by the relevance of non-saturated lasso languages (which cannot be described by a Wilke algebra) in automata learning \cite{AngluinFisman2016LearningRegularOmegaLang}.
A categorical approach to learning $\omega$-regular languages~\cite{UrbatSchroeder2020AutomataLearning}
is also based on lasso languages and algebraic recognition.
Ideas from~\cite{UrbatSchroeder2020AutomataLearning} relating language acceptance via Wilke algebras with automata acceptance provided useful inspiration for our own constructions.
A different algebraic approach to lasso languages is found in \cite{Cruchten2024KleeneTheoremsForLasso} where so-called lasso algebras are introduced as counterparts of Kleene algebra for reasoning about language equivalence of lasso expressions.

Closely related to our work is a very recent and independently developed dual adjunction~\cite{Cruchten2024OnTransitionConstructionsArxiv} between lasso/$\Omega$-automata and certain \emph{bisimulation congruences}. Bisimulation congruences correspond to lasso semigroup quotients satisfying an extra bisimulation condition. Another
difference with the present work is that the adjunction in \cite{Cruchten2024OnTransitionConstructionsArxiv}, which is based on constructions from \cite{Cruchten2022TopicsInOmegaAutomata}, is language-preserving whereas the adjunction $\Aut \dashv \Alg$ is language-reversing. We leave a detailed comparison between the two approaches as future work.

The adjunctions we have established are instrumental for clarifying the relationship between coalgebraic and algebraic approaches to languages of infinite words (although we deliberately kept the coalgebraic perspective implicit). This could aid the discovery of new coalgebraic or algebraic approaches to language theory beyond infinite words. In particular, there are extensions of Wilke algebras for \emph{infinite trees} \cite{BojIdz2009AlgebraForInfForests,IdzSkrBoj2016RegularLanguagesOfThinTrees}, but no notions of lasso or $\Omega$-automata on infinite trees. We see this as a fruitful direction for future work.

%
%
%
\bibliographystyle{splncs04}
\bibliography{references}

\end{document}